\newtheorem{theorem}{\textit{Theorem}}
\DeclareMathOperator*{\argmin}{\rm argmin}
\newcommand{\MATLAB}{\textsc{Matlab}\xspace}
\newcommand{\nonl}{\renewcommand{\nl}{\let\nl\oldnl}}% Remove line number for one line
\newcommand\approxsim{\mathchoice
  {\@approxsim {\displaystyle}      {1ex} }
  {\@approxsim {\textstyle}         {1ex} }
  {\@approxsim {\scriptstyle}       {.7ex}}
  {\@approxsim {\scriptscriptstyle} {.5ex}}}
\newcommand\@approxsim[2]{%
  \mathrel{%
    \ooalign{%
      $\m@th#1\sim$\cr
      \hidewidth$\m@th#1.$\hidewidth\cr
      \hidewidth\raise #2 \hbox{$\m@th#1.$}\hidewidth\cr
    }%
  }%
}
\begin{document}

\title{Reconfigurable Intelligent Surface Enabled Full-Duplex/Half-Duplex Cooperative Non-Orthogonal Multiple Access}
\author{Mohamed Elhattab,~\IEEEmembership{Student Member,~IEEE}, \IEEEauthorblockN{Mohamed Amine Arfaoui},~\IEEEmembership{Student Member,~IEEE}, Chadi Assi,~\IEEEmembership{Fellow,~IEEE}, Ali Ghrayeb,~\IEEEmembership{Fellow,~IEEE} \thanks{Mohamed Elhattab is with the ECE Department, Concordia University, Montreal, Quebec, H3G 1M8, Canada (email: m\_elhatt@encs.concordia.ca). Mohamed Amine Arfaoui and Chadi Assi are with the CIISE Department, Concordia University, Montreal, Quebec, H3G 1M8, Canada (email: m\_arfaou@encs.concordia.ca, assi@mail.concordia.ca). Ali Ghrayeb is with the Electrical and Computer Engineering (ECE) department, Texas A \& M University at Qatar, Doha, Qatar (e-mail: ali.ghrayeb@qatar.tamu.edu).}}
\maketitle
\begin{abstract}
This paper investigates the downlink transmission of reconfigurable intelligent surface (RIS)-aided cooperative non-orthogonal-multiple-access (C-NOMA), where both half-duplex (HD) and full-duplex (FD) relaying modes are considered. The system model consists of one base station (BS), two users and one RIS. The goal is to minimize the total transmit power at both the BS and at the user-cooperating relay for each relaying mode by jointly optimizing the power allocation coefficients at the BS, the transmit power coefficient at the relay user, and the passive beamforming at the RIS, subject to power budget constraints, the successive interference cancellation constraint and the minimum required quality-of-service at both cellular users. To address the high-coupled optimization variables, an efficient algorithm is proposed by invoking an alternating optimization approach that decomposes the original problem into a power allocation sub-problem and a passive beamforming sub-problem, which are solved alternately. For the power allocation sub-problem, the optimal closed-form expressions for the power allocation coefficients are derived. Meanwhile, the semi-definite relaxation approach is exploited to tackle the passive beamforming sub-problem. The simulation results validate the accuracy of the derived power control closed-form expressions and demonstrate the gain in the total transmit power brought by integrating the RIS in C-NOMA networks.
\end{abstract}
\vspace{-0.4cm}
\begin{IEEEkeywords}
C-NOMA, FD, HD,  passive beamforming, power allocation, RIS, 6G.   
\end{IEEEkeywords}
\section{Introduction}
\subsection{Motivation}
With the proliferation of numerous burgeoning applications and services such as ultra-reliable low-latency communication (URLLC), massive machine type communications (mMTC), enhanced mobile broadband (eMBB), among others, wireless communication systems are expected to face daunting challenges\cite{Vaezi_2019}. To satisfy and accommodate these ever-increasing traffic demands, the diverse quality-of-services (QoS) requirements, and the massive connectivity accompanied with these new applications, various innovative and promising technologies and architectures will need to be developed \cite{Akyildiz_2020}. Novel multiple-access-techniques are currently been explored in both academia and industry lately \cite{Ding_2017_A_Survey} in order to accommodate for such unprecedented requirements. \par Non-orthogonal multiple access (NOMA) has been deemed as one of the vital enabling multiple access techniques for the upcoming sixth-generation (6G) cellular networks \cite{Vaezi_2019, Elhattab_TCOM}. This is due to its ability of enhancing the network spectral efficiency and supporting massive number of connected devices \cite{Vaezi_2019}. NOMA has been included in the third Generation Partnership Project (3GPP) LTE-A standard (Release 15) \cite{Ding_2020}.\footnote{The focus of this  paper is on power domain NOMA, and thus, throughout this  paper, the term NOMA is used to refer to power domain NOMA, unless otherwise stated.} Different from the traditional orthogonal multiple access (OMA) techniques, NOMA allows multiple user equipment (UEs) to utilize the same resource (radio frequency, time slot, code, etc.). This can be achieved by using superposition coding at the transmitter with appropriately power allocation coefficients and performing successive interference cancellation (SIC) at the receiver side to remove the intra-NOMA user interference \cite{Elhattab_TCOM}. This allows the UEs with poor channel conditions, which are referred to as ``far'' NOMA UEs, to be simultaneously served  with the UEs that have good channel conditions, which are referred to as ``near'' NOMA UEs. It has been proven that NOMA achieves a significant gain over OMA techniques \cite{Zhang_2017_Downlink, Liu_2017_Non}.
\par With the objective of further enhancing the performance of NOMA cellular systems, the integration of user-cooperative relaying and NOMA, which is known as cooperative NOMA (C-NOMA), has been proposed \cite{Zeng_2020_Cooperative}. In this transmission technique, with the aid of device-to-device (D2D) communication and the half-duplex (HD)/full-duplex (FD) relaying modes, the near NOMA UEs act as decode-and-forward (DF) relays for the far NOMA UEs. Specifically, the near NOMA UEs exploit the successive decoding strategy adopted at their sides to assist the transmissions between the BS and the far NOMA UEs \cite{Elhattab2020AJoint, Elhattab_2020_Power}. As a result, by adding the cooperation link between the near NOMA UEs and the far NOMA UEs as a new degree of diversity with the BS-UE links, C-NOMA based cellular network can achieve higher spatial diversity and better fairness than NOMA \cite{Elhattab2020AJoint}. While NOMA/C-NOMA transmission techniques are expected to unleash the potential of next-generation cellular networks, another new degree of freedom can be added to the network by making the wireless propagation environment controllable and programmable, which can be achieved through the reconfigurable intelligent surface (RIS) technology \cite{Sena_2020_What}.
\par RIS has recently been recognized as a key promising technology for achieving cost-, energy-, and spectral-efficient communications via intelligently reconfiguring and controlling the wireless propagation environment \cite{Gong_2020}, \cite{Basar_2019}. It is composed of a large number of passive low-cost elements, each of which is capable of independently tuning the phase-shift of the incident radio waves. For instance, by appropriately configuring the phase-shifts with the aid of the RIS controller, the reflected signals can be made to add constructively at the receiver, and therefore, enhancing the received signal strength at the point of interest. In contrast to traditional relays, RIS has the benefit of low power consumption as it passively reflects the incident signals without the need of any radio frequency chains. In addition, the radio signal reflected by the RIS is free from noise corruption or self-interference (SI) in an inherently full-duplex fashion \cite{Basar_2019}. Motivated by the aforementioned beneﬁts of both the RIS and C-NOMA, the potential performance enhancement brought by effectively integrating RIS technology with C-NOMA-based cellular networks is investigated in this paper. This amalgamation between RIS and C-NOMA technologies can provide a promising paradigm for the upcoming 6G networks by providing additional paths that can jointly construct a stronger combined channel gain at the user of interest by leveraging the RIS technology and by improving the network connectivity by adopting the C-NOMA technique.
\subsection{State-of-Art}
The research on RIS-enabled NOMA-based cellular networks is gaining momentum to enhance and improve different performance metrics, such as power consumption \cite{Zheng_2020_Intelligent, Wang_2020_On, fu2019reconfigurable}, network spectral efficiency \cite{Guo_2020_Intelligent, MU_2020Exploiting, ni_2020_resource}, network energy efficiency \cite{Fang_2020_Energy},  user fairness \cite{yang2020intelligent}, and physical layer security\cite{Zhang_2020_Robust}. The authors in \cite{Zheng_2020_Intelligent} evaluated the minimum power consumption in a two-UE NOMA network for three different multiple  access  schemes (frequency division multiple access, time division multiple access, and NOMA) to achieve the same required data rate threshold, and have proved that NOMA can achieve a superior performance compared to the counterpart OMA schemes in most of the system settings. The transmit beamforming at the BS, phase-shift matrix at the RIS and the channel gains ordering for a multi-UE NOMA cellular network were jointly optimized to minimize the total transmit power in \cite{fu2019reconfigurable}. With the goal of maximizing the sum-rate in a two-UE and multi-UE NOMA cellular network, a joint optimization of the power allocation at the BS and the passive beamforming at the RIS were investigated in \cite{Guo_2020_Intelligent,MU_2020Exploiting}, respectively. Aiming at improving the spectral efficiency of a multi-cell RIS-NOMA network, the  problem  of joint user association, sub-channel  allocation, power control, and passive beamforming was formulated in \cite{ni_2020_resource}. By jointly optimizing the active beamforming at the BS and the passive beamforming at the RIS, an efficient algorithm was developed in \cite{Fang_2020_Energy} to maximize the network energy efficiency of a two-user RIS-NOMA network. A joint active beamforming at the BS and passive beamforming at the RIS was investigated in \cite{yang2020intelligent} to maximize the minimum achievable data rate, and accordingly, ensure user fairness. 
\par Going deep into the investigation of RIS-NOMA cellular networks, the performance analysis  using tools from stochastic geometry was considered in \cite{Elhattab_2020, Hou_2020_Reconfigurable, Ding_2020_Simple, zhang2020reconfigurable}. Specifically, the authors in \cite{Elhattab_2020} analyzed the network spectral efficiency when both the RIS technology and the coordinated multipoint transmission were jointly integrated in NOMA cellular networks. The network spectral efficiency, energy efficiency, and outage probability were derived in \cite{Hou_2020_Reconfigurable}. In \cite{Ding_2020_Simple}, the RIS is used for assisting NOMA transmission when spatial division multiple access was applied by the BS to generate orthogonal beams for serving multiple two-UE NOMA clusters. Closed form expressions for the coverage probability and the ergodic rate were derived in an RIS assisted two-UE NOMA network in \cite{zhang2020reconfigurable}. 
\par All the aforementioned research works \cite{Zheng_2020_Intelligent, Wang_2020_On, fu2019reconfigurable, Guo_2020_Intelligent, MU_2020Exploiting, ni_2020_resource, Fang_2020_Energy, yang2020intelligent, Zhang_2020_Robust, Elhattab_2020, Hou_2020_Reconfigurable, Ding_2020_Simple, zhang2020reconfigurable} have only investigated the performance of integrating RIS with NOMA without considering the user-relaying cooperation, whilst there is a lack of investigations in the existing literature on the performance of C-NOMA when the RIS is considered in the network. Recently, the authors in \cite{zuo2020reconfigurable} studied the performance of an RIS-assisted HD C-NOMA system by jointly optimizing the active beamforming at the BS, the user relaying power, and the phase shifts at the RIS. However, this work has not comprehensively analyzed the performance of RIS-aided HD C-NOMA compared to the FD C-NOMA, either with and without the RIS technology.
\subsection{Contributions}
Against the above background, and to the best of our knowledge, integrating RIS with C-NOMA in HD relaying mode has not been well studied in the literature and remains still unexplored. Moreover, this paper is one of the early attempts to explore the performance of FD C-NOMA systems with the assistance of RIS. These facts motivate this article to study the joint power allocation at the BS, transmit relaying power at the near UE, and passive beamforming at the RIS that minimize the total transmit power. Driven by the aforementioned observations, the main contributions of this paper can be summarized as follows.
\begin{itemize}
\item  A downlink RIS-enabled HD/FD C-NOMA framework consisting of one BS, one near NOMA user, one far NOMA user, and one RIS is considered in this paper, where the near NOMA user can relay the message of the far NOMA user in either a HD or a FD relaying mode. For each relaying mode, this framework is formulated as an optimization problem with the objective of minimizing the total transmit power while guaranteeing the data rate QoS requirements for the users, the power budget at both BS and near NOMA user and the SIC constraint.
\item Due to the high coupling between the power allocation coefficients at both the BS and near NOMA user from one side and the passive beamforming at the RIS from the other side, the formulated total transmit power minimization problem is neither linear nor convex, and hence, is difficult to be directly solved. In order to overcome this challenge, the alternating optimization approach is adopted, in which the original optimization problem is decomposed into two sub-problems, namely, a power allocation sub-problem and a phase-shift optimization sub-problem, which are solved in an alternating manner.
\item For the power allocation sub-problem, and for given phase-shift matrices at the RIS, the feasibility conditions as relations between the QoS requirements, the power budget of the active nodes (BS and near NOMA user), and SIC constraints are derived. Then, the optimal solution of the power allocation sub-problem is determined in closed-form expressions, i.e., with a computational complexity of $\mathcal{O}(1)$. 
\item For the phase-shift optimization sub-problem, and for given values of the power allocation coefficients at the BS and fixed transmit relaying power at the near NOMA user, semi-definite relaxation (SDR) technique is applied to obtain a solution of the phase-shift matrix at the RIS.
\end{itemize}
\par Extensive simulations were carried out to evaluate the performance of the proposed RIS-enabled HD/FD C-NOMA cellular network. In order to validate the effectiveness of the proposed framework, the FD C-NOMA scheme without the assistance of the RIS was considered as a benchmark. The numerical results demonstrate  the efficacy of the proposed framework compared to the FD C-NOMA without RIS scheme in terms of the total transmit power. In fact, the numerical results unveil that 
\begin{itemize}
    \item The RIS-enabled FD C-NOMA provides a significant gain in terms of the total transmit power compared to the RIS-enabled HD C-NOMA and the FD C-NOMA without RIS despite the existence of high residual SI at near NOMA user.
    \item The FD C-NOMA with the assistance of RIS has more resistance to the residual SI effect and can tolerate high SI values compared to the same system without RIS.
    \item RIS-enabled HD C-NOMA can beat the FD C-NOMA without RIS scheme despite the pre-log penalty in the HD relaying mode. This performance gain depends on the number of RIS reflecting elements, the SI channel gain at the near NOMA user, and the required QoS at the NOMA users.
\end{itemize}   
\begin{table}[t]
\centering
\caption{Table of Notations}
\label{Table: Symbols}
\renewcommand{\arraystretch}{0.55} 
\setlength{\tabcolsep}{0.2cm} 
\begin{tabular}{| l | l |}
  \hline 
  \multicolumn{2}{|c|}{\textbf{System Parameters}} \\
  \hline 
  \hline
  $P_{\mathrm{BS}}$, $P_{\rm n}$ & Power budget of BS and UE$_{\rm n}$, respectively. \\ 
  \hline
  $R_{\rm n}^{\mathrm{th}}$ and $R_{\rm f}^{\mathrm{th}}$ & Minimum required data rate for UE$_{\rm n}$ and UE$_{\rm f}$, respectively.\\
  \hline
  $\beta^{\mathrm{HD}}, \beta^{\mathrm{FD}}$ & Power fraction coefficients at UE$_{\rm n}$ in HD and FD relaying mode, respectively. \\ 
  \hline
  $\alpha^{\mathrm{HD}}_{\rm n}, \alpha^{\mathrm{HD}}_{\rm f}$ & Power allocation coefficients at BS for UE$_{\rm n}$ and UE$_{\rm f}$ in the HD case, respectively.  \\ 
  \hline
  $\alpha^{\mathrm{FD}}_{\rm n}, \alpha^{\mathrm{FD}}_{\rm f}$ & Power allocation coefficients at BS for UE$_{\rm n}$ and UE$_{\rm f}$ in the FD case, respectively.  \\  
  \hline 
  $\boldsymbol{\Theta}_{(1)}^{\mathrm{HD}}$, $\boldsymbol{\Theta}_{(2)}^{\mathrm{HD}}$  & RIS phase-shift matrix in the first and second time slots, respectively, in the HD case. \\ 
  \hline 
  $\boldsymbol{\Theta}^{\mathrm{FD}}$  & RIS phase-shift matrix in the FD case. \\
  \hline 
  $\boldsymbol{h}_{\rm br}, \boldsymbol{h}_{\rm rn}, \boldsymbol{h}_{\rm rf}, \boldsymbol{h}_{\rm nr}$ & Channel gains for BS-RIS, RIS-UE$_{\rm n}$, RIS-UE$_{\rm f}$, and UE$_{\rm n}$-RIS links, respectively. \\
  \hline 
  $h_{\rm bn}, h_{\rm bf}, h_{\rm nf}$ & Channel gains for BS-UE$_{\rm n}$, BS-UE$_{\rm f}$, UE$_{\rm n}$-UE$_{\rm f}$ links, respectively.\\
  \hline 
  $\hat{\boldsymbol{h}}_{\rm rf}$ & Channel gain for RIS-UE$_{\rm f}$ link in the second time-slot for the HD case.  \\ 
  \hline 
  $\eta_x$ & Path-loss exponent for the communication link $x$. \\ 
  \hline
\end{tabular} 
\label{T1}
\end{table}
\vspace{-0.8cm}
\subsection{Paper Outline and Notations}
The rest of the paper is organized as follows. Section \ref{Sec:System Model} presents the system model of  the proposed RIS-enabled C-NOMA system. Section \ref{SINR and Rate} presents the signal-to-interference-plus-noise-ratio ($\tt{SINR}$) and the rate analysis for both HD and FD relaying modes. Section \ref{Sec: HD-CNOMA RIS} and \ref{Sec:FD C-NOMA RIS} present the formulated optimization problem and the solution approach for the RIS-enabled HD C-NOMA system and the RIS-enabled FD C-NOMA system, respectively. The simulation results and the conclusion are presented in Sections \ref{Sec:Simulation} and \ref{Sec:Conclusion}, respectively.
\par The notations and symbols adopted throughout the paper are summarized in Table \ref{Table: Symbols}. In addition, vectors and matrices are denoted by bold-face lower-case and upper-case letters, respectively. The distribution of a circularly symmetric complex Gaussian random vector with mean vector $\boldsymbol {x}$ and covariance matrix ${\boldsymbol \Sigma}$ is denoted by $\mathcal {CN}(\boldsymbol {x},{\boldsymbol \Sigma })$. For a complex-valued vector $\boldsymbol {y} , |\boldsymbol {y}|$ accounts for its Euclidean norm, $\arg (\boldsymbol {y})$ denotes a vector that contains the angles of the elements of $\boldsymbol {y}$, and $\text {diag}(\boldsymbol {y})$ denotes a diagonal matrix in which each element in the diagonal is the corresponding element in $\boldsymbol {y}$. For a square matrix $\boldsymbol S , {\mathrm{tr}}(\boldsymbol S)$ denotes its trace, while $\boldsymbol S\succeq \boldsymbol {0}$ means that $\boldsymbol S$ is positive semi-definite. For any general matrix $\boldsymbol {M}, \boldsymbol {M}^{H}$ and  ${\mathrm{rank}}(\boldsymbol {M})$ denote its conjugate transpose and rank, respectively. Moreover, $e^{(\cdot)}$ and $\exp(\cdot)$ denote the exponential function. Finally, $[\boldsymbol{h}]_{n}$ is the $n$th entry of $\boldsymbol{h}$.
\vspace{-0.5cm}
\section{System Model}
\label{Sec:System Model}
\subsection{Network Model}
We consider a downlink transmission in an RIS-enabled two-UE C-NOMA cellular system, which consists of one BS, one near UE denoted by UE$_{\rm n}$, one far UE denoted by UE$_{\rm f}$ and one RIS equipped with $M$ reflecting elements as shown in Fig. \ref{System_Model}. In the proposed model, UE$_{\rm n}$ has, on average, a better channel condition than UE$_{\rm f}$, since UE$_{\rm n}$ is considered as the near UE and UE$_{\rm f}$ is deemed as the far UE. Therefore, in order to enhance the signal quality at the far NOMA UE, RIS is deployed near to UE$_{\rm f}$. With the assistance of the RIS, the BS serves UE$_{\rm n}$ and UE$_{\rm f}$ simultaneously using NOMA, while UE$_{\rm n}$ relays the signal for UE$_{\rm f}$ either in a DF HD relaying mode or in a DF FD relaying mode. Let ${h}_{\rm bn} \in \mathbb{C}, {h}_{\rm bf} \in \mathbb{C}, \boldsymbol{h}_{\rm br} \in \mathbb{C}^{M \times 1}, \boldsymbol{h}_{\rm rn} \in \mathbb{C}^{M \times 1}, \boldsymbol{h}_{\rm nr} \in \mathbb{C}^{M \times 1}, \boldsymbol{h}_{\rm rf} \in \mathbb{C}^{M \times 1}, h_{\rm nf} \in \mathbb{C}$, and $h_{\mathrm{SI}}$ be the channel coefficients of the communication links from BS $\longrightarrow$ UE$_{\rm n}$, BS $\longrightarrow$ UE$_{\rm f}$, BS $\longrightarrow$ RIS, RIS $\longrightarrow$ UE$_{\rm n}$, UE$_{\rm n}$ $\longrightarrow$ RIS, UE$_{\rm n}$ $\longrightarrow$ UE$_{\rm f}$, and of the SI link, respectively. Apart from $h_{\mathrm{SI}}$, the small-scale fading and the large-scale fading are considered both for each communication link. Nevertheless, for all $y \in \{{\rm bn}, {\rm bf}, \rm nf\}$ and $z \in \{{\rm rn}, \rm nr\}$, the small scale fading of $h_{y}$ and $\boldsymbol{h}_{z}$ are modeled as Rayleigh fading. Consequently, the corresponding channel coefficients can be expressed as  
\begin{figure}[!t]
     \centering
     \includegraphics[width = 0.55 \columnwidth]{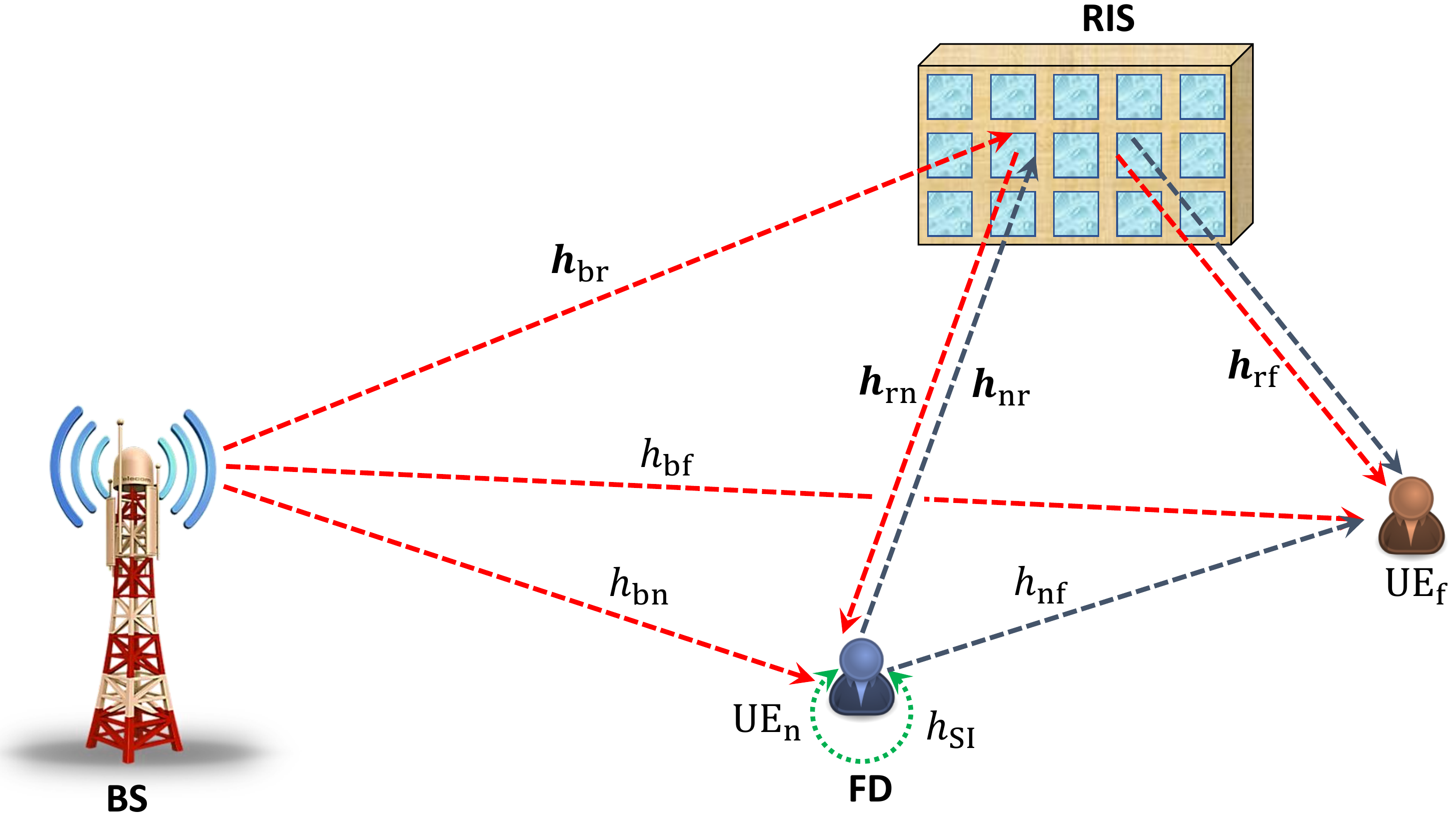}
     \caption{RIS-enabled HD/FD C-NOMA cellular network.}
     \label{System_Model}
 \end{figure}
\begin{align}
    {h}_{y} &= g_{y} \sqrt{PL(d_y)}, \qquad \forall\,y \in \{{\rm bn}, {\rm bf}, \rm nf\}, \cr
    \boldsymbol{h}_{z} &= \textbf{g}_{z} \sqrt{PL(d_z)}, \qquad \forall\,z \in \{{\rm rn}, \rm nr\},
\end{align}
where $g_y \in \mathbb{C}$ and $\textbf{g}_z \in \mathbb{C}^{M \times 1}$ are the small-scale Rayleigh fading with zero mean and unit variance, and $PL(d_y)$ and $PL(d_z)$ are the large-scale path-losses modelled, respectively, as $PL(d_y) = \rho_0 (\frac{d}{d_0})^{-\eta_y}$ and $PL(d_z) = \rho_0 (\frac{d}{d_0})^{-\eta_z}$, such that $\rho_0$ is the path-loss at a reference distance $d_0$, $\eta_y$ and $\eta_z$ are the path-loss exponents, and $d_y$ and $d_z$ are the distance between the end-to-end nodes in the $y$th and $z$th links, respectively.  
\par Considering the communication links between the BS and the RIS and between the RIS and UE$_{\rm f}$, line-of-sight (LoS) components are assumed to exist \cite{Guo_2020_Intelligent}. Thus, these communication links experience small-scale fading that are modelled as Rician fading. The corresponding channel coefficients for these communication links can be expressed as,
\begin{equation}
    \boldsymbol{h}_{x} =  \sqrt{PL(d_x)}\left(\sqrt{\frac{1}{1+\kappa_x}}\textbf{g}_{x} + \sqrt{\frac{\kappa_x}{1+\kappa_x}}\hat{\textbf{g}}_{x} \right), \qquad \forall\,x \in \{\rm br, rf\},
\end{equation}
where $\kappa_x$ is the Rician factor, $\hat{\textbf{g}}_x$ denotes the deterministic line-of-sight component and ${\textbf{g}}_x$ represents the non-line-of-sight component, which follows a Rayleigh distribution with mean zero and variance one. Due to the double fading effects of the BS-RIS-UE cascaded link, the cascaded communication link suffers from more severe path-loss than the direct communication link, i.e., the BS-UE link. Thus, it is assumed that the power of the reflected signals by the RIS two or more times can be ignored.  The channel state information (CSI) of all the communication links is assumed to be perfectly known at the BS \cite{Zheng_2020_Intelligent, Wang_2020_On, fu2019reconfigurable, Guo_2020_Intelligent, MU_2020Exploiting, ni_2020_resource, Fang_2020_Energy, yang2020intelligent,  Elhattab_2020, Hou_2020_Reconfigurable, Ding_2020_Simple, zhang2020reconfigurable}. \footnote{To characterize the theoretical performance gain and provide useful insights achieved by the amalgamation between RIS and C-NOMA, we assume that the CSI of all communication links is available at the BS. This CSI can be efficiently obtained by one of the channel estimation techniques for RIS-assisted wireless networks \cite{You_2020_Channel, taha2019enabling}.} \\ 
\indent Since UE$_{\rm n}$ can adopt two different relaying modes, i.e., either HD or FD, two different transmission schemes, i.e., RIS-enabled HD C-NOMA and RIS-enabled FD C-NOMA, can be adopted, which are adequately defined in the next subsection.
\vspace{-0.4cm}
 \subsection{Transmission Model}
 The transmission model for the two-UE C-NOMA consists of two phases (direct transmission phase and cooperative transmission phase) that are detailed as follows.
 \begin{itemize}
\item \textbf{Direct transmission (DT) phase}: the BS applies superposition coding on the signals intended to UE$_{\rm n}$ and UE$_{\rm f}$. Then, it transmits the superimposed signal to both of them. The transmitted signal by the BS will hit the RIS and will be then reflected back to UE$_{\rm n}$ and UE$_{\rm f}$ to improve their signals reception diversity. Following NOMA principle, UE$_{\rm n}$ first performs SIC to decode the intended signal for UE$_{\rm f}$. Second, it cancels the decoded signal of the UE$_{\rm f}$ from its own reception. Afterwards, UE$_{\rm n}$ decodes its own signal from the resulting reception. Meanwhile, UE$_{\rm f}$ treats the signal of UE$_{\rm n}$ as a noise. 
\item \textbf{Cooperative transmission (CT) phase}: UE$_{\rm n}$ relays the decoded signal of UE$_{\rm f}$ through a D2D channel. Consequently, two signals will be received at the UE$_{\rm f}$'s side. The first is resulting from the transmission of UE$_{\rm n}$ and the second signal comes from the reflection by the RIS. Finally, UE$_{\rm f}$ combines the received signals coming from the BS, RIS, and UE$_{\rm n}$ and then decodes its own signal.
 \end{itemize}
 \par For the case of HD relaying mode, the two transmission phases (DT and CT) occur in two consecutive time-slots. As shown in Fig. \ref{System_Model}, the red lines represent the DT phase, which occurs in the first time-slot and the black lines show the CT phase, which occurs in the second time-slot. However, for the case of FD relaying mode, DT and CT occur in the same time-slot with the cost of inducing SI at UE$_{\rm n}$ (the green line in Fig. \ref{System_Model}). After presenting the main operations of RIS-enabled HD C-NOMA and RIS-enabled FD C-NOMA, we direct our attention toward calculating the received $\tt{SINR}$ and the corresponding achievable data rates at UE$_{\rm n}$ and UE$_{\rm f}$ in the two different operation modes (FD and HD) as shown in the next section.
 \section{Downlink $\tt{SINR}$s Model and Achievable Rates Analysis}
 \label{SINR and Rate}
 In this section, we present the analysis of both the $\tt{SINR}$s and the achievable rates for a C-NOMA-based cellular system assisted by RIS for both HD and FD relaying modes. For the two different relaying modes and at each channel use, the superimposed mixture of the signals intended to UE$_{\rm n}$ and UE$_{\rm f}$ at the BS are expressed, respectively, as
 \begin{align}
     \mathcal{S}^{\mathrm{HD}} &= \sqrt{\alpha_{\rm n}^{\mathrm{HD}} P_{\mathrm{BS}}} \mathcal{S}_{\rm n} + \sqrt{\alpha_{\rm f}^{\mathrm{HD}} P_{\mathrm{BS}}} \mathcal{S}_{\rm f}, \cr
     \mathcal{S}^{\mathrm{FD}} &= \sqrt{\alpha_{\rm n}^{\mathrm{FD}} P_{\mathrm{BS}}} \mathcal{S}_{\rm n} + \sqrt{\alpha_{\rm f}^{\mathrm{FD}} P_{\mathrm{BS}}} \mathcal{S}_{\rm f},
 \end{align}
 where $\mathcal{S}_{\rm n}$ and $\mathcal{S}_{\rm f}$ represent the intended signals of UE$_{\rm n}$ and UE$_{\rm f}$, respectively, such that $\mathbb{E}[|\mathcal{S}_{\rm n}|^2] = 1$ and $\mathbb{E}[|\mathcal{S}_{\rm f}|^2] = 1$, $P_{\mathrm{BS}}$ represents the power budget of the BS, and for ${\rm R} \in \{\mathrm{HD}, \mathrm{FD}\}$, $\alpha_{\rm n}^{\rm R}$ and $\alpha_{\rm f}^{\rm R}$ represent the power control coefficients allocated by the BS to UE$_{\rm n}$ and UE$_{\rm f}$ in the relaying mode ${\rm R}$, respectively. Therefore, for ${\rm R} \in \{\mathrm{HD}, \mathrm{FD}\}$, the power allocated by the BS to UE$_{\rm n}$ and UE$_{\rm f}$ in the relaying mode ${\rm R}$ are $\alpha_{\rm n}^{\rm R} P_{\mathrm{BS}}$ and $\alpha_{\rm f}^{\rm R} P_{\mathrm{BS}}$, respectively. 
 \subsection{RIS-enabled HD C-NOMA}
For the case of RIS-enabled HD C-NOMA, the received signal at UE$_{\rm n}$ in the DT phase can be expressed as
\begin{equation}
    \mathcal{Y}_{\rm n}^{\mathrm{HD}} = (h_{\rm bn} + \boldsymbol{h}_{\rm rn}^H \boldsymbol{\Theta}^{\mathrm{HD}}_{(1)} \boldsymbol{h}_{\rm br}) \mathcal{S}^{\mathrm{HD}} + \mathcal{W}_{\rm n},
\end{equation}
where $\mathcal{W}_{\rm n} \sim (0, \sigma_{\rm n}^2)$ is the additive white Gaussian noise (AWGN) with zero mean and variance $\sigma_{\rm n}^2$ and $\boldsymbol{\Theta}^{\mathrm{HD}}_{(1)} = \mathrm{diag}\{e^{j \theta_1^{(1)}}, e^{j \theta_2^{(1)}}, \dots, e^{j \theta_M^{(1)}}\} $ is the phase-shift matrix of the RIS in the DT phase, where for all $m \in \llbracket 1, M \rrbracket$, $\theta_m^{(1)} \in [0, 2\pi]$ represents the phase shift of the $m$th reflecting element of the RIS in the first time slot. According to the NOMA principle, SIC is adopted at UE$_{\rm n}$ to decode
the message $\mathcal{S}_{\rm f}$ of UE$_{\rm f}$. Consequently, the received $\tt{SINR}$ at UE$_{\rm n}$ to decode the message of UE$_{\rm f}$ can be expressed as
\begin{align}
    {\tt{SINR}}_{\rm n \longrightarrow f}^{{\mathrm{HD}}} = \frac{\alpha_{\rm f}^{\mathrm{HD}} P_{\mathrm{BS}}|h_{\rm bn} + \boldsymbol{h}^H_{\rm rn}\boldsymbol{\Theta}_{(1)}^{\mathrm{HD}}\boldsymbol{h}_{\rm br}|^2}{ \alpha_{\rm n}^{\mathrm{HD}} P_{\mathrm{BS}}|h_{\rm bn} + \boldsymbol{h}^H_{\rm rn}\boldsymbol{\Theta}_{(1)}^{\mathrm{HD}}\boldsymbol{h}_{\rm br}|^2 + \sigma_{\rm n}^2}.
\end{align}
\par Therefore, after cancelling UE$_{\rm f}$'s message, the $\tt{SINR}$ at UE$_{\rm n}$ to decode its own message can be expressed as
\begin{align}
    {\tt{SINR}}_{\rm n \longrightarrow n}^{\mathrm{HD}} = \frac{\alpha_{\rm n}^{\mathrm{HD}} P_{\mathrm{BS}}|h_{\rm bn} + \boldsymbol{h}^H_{\rm rn}\boldsymbol{\Theta}_{(1)}^{\mathrm{HD}}\boldsymbol{h}_{\rm br}|^2}{\sigma_{\rm n}^2}.
\end{align}
\par Thus, the achievable data rate at UE$_{\rm n}$ to decode the  message of UE$_{\rm f}$ and to decode its own message can be  expressed, respectively, as
\begin{align}
    \mathcal{R}_{\rm n \longrightarrow f}^{\mathrm{HD}} &= \frac{1}{2}~\mathrm{log}_2 \left(1 + {{\tt{SINR}}_{\rm n \longrightarrow f}^{\mathrm{HD}}}\right), \\
    \mathcal{R}_{\rm n \longrightarrow n}^{\mathrm{HD}} &= \frac{1}{2}~\mathrm{log}_2 \left(1 + {{\tt{SINR}}_{\rm n \longrightarrow n}^{\mathrm{HD}}}\right).
\end{align}
On the other hand, the received signal at UE$_{\rm f}$ in the DT phase is expressed as
\begin{equation}
    \mathcal{Y}_{\rm f}^{(1)} = (h_{\rm bf} + \boldsymbol{h}^H_{\rm rf}\boldsymbol{\Theta}_{(1)}^{\mathrm{HD}}\boldsymbol{h}_{\rm br}) \mathcal{S}^{\mathrm{HD}} + \mathcal{W}_{\rm f}^{(1)},
\end{equation}
where $\mathcal{W}_{\rm f}^{(1)} \sim (0, \sigma_{\rm f}^2)$ is the AWGN at UE$_{\rm f}$ in the first phase. Therefore, the received  $\tt{SINR}$ at UE$_{\rm f}$ in the DT phase is given by
\begin{align}
    {\tt{SINR}}_{(1)}^{\mathrm{HD}} = \frac{\alpha_{\rm f}^{\mathrm{HD}} P_{\mathrm{BS}}|h_{\rm bf} + \boldsymbol{h}^H_{\rm rf}\boldsymbol{\Theta}_{(1)}^{\mathrm{HD}}\boldsymbol{h}_{\rm br}|^2 }{ \alpha_{\rm n}^{\mathrm{HD}} P_{\mathrm{BS}}|h_{\rm bf} + \boldsymbol{h}^H_{\rm rf}\boldsymbol{\Theta}_{(1)}^{\mathrm{HD}}\boldsymbol{h}_{\rm br}|^2 + \sigma_{\rm f}^2}.
\end{align}
\par Meanwhile, in the CT phase, UE$_{\rm n}$ forwards the decoded message $\mathcal{S}_{\rm f}$ to UE$_{\rm f}$. Consequently, the observation at UE$_{\rm f}$ in this phase resulting from the transmission of UE$_{\rm n}$ and the reflections by the RIS can be given by
\begin{equation}
    \mathcal{Y}_{\rm f}^{(2)} = \beta^{\mathrm{HD}} P_{\rm n} (h_{\rm nf} + \hat{\boldsymbol{h}}^H_{\rm rf}\boldsymbol{\Theta}_{(2)}^{\mathrm{HD}}\boldsymbol{h}_{\rm nr}) \mathcal{S}_{\rm f} + \mathcal{W}_{\rm f}^{(2)},
\end{equation}
where $\beta^{\mathrm{HD}} \in [0, 1]$ is the fraction of the allocated power by UE$_{\rm n}$, $P_{\rm n}$ is the power budget at UE$_{\rm n}$, $\hat{\boldsymbol{h}}_{\rm rf}^H$ is the channel gain between the RIS and UE$_{\rm f}$ in the CT phase and $\mathcal{W}_{\rm f}^{(2)} \sim (0, \sigma_{\rm f}^2)$ is the AWGN at UE$_{\rm f}$ in the CT phase. In addition, $\boldsymbol{\Theta}_{(2)}^{\mathrm{HD}}  = \mathrm{diag}\{e^{j \theta_1^{(2)}}, e^{j \theta_2^{(2)}}, \dots, e^{j \theta_M^{(2)}}\}$ is the phase-shift matrix in the CT phase, where for all $m \in \llbracket 1, M \rrbracket$, $\theta_m^{(2)} \in [0, 2\pi]$ represents the phase shift of the $m$th reflecting element of the RIS in the second time-slot. Therefore, the received $\tt{SINR}$ at UE$_{\rm f}$ in the second phase can be expressed as
\begin{equation}
    {\tt{SINR}_{(2)}^{\mathrm{HD}}} = \frac{\beta^{\mathrm{HD}} P_{\rm n} |h_{\rm nf} + \hat{\boldsymbol{h}}^H_{\rm rf}\boldsymbol{\Theta}_{(2)}^{\mathrm{HD}}\boldsymbol{h}_{\rm nr}|^2}{\sigma_{\rm f}^2}.
\end{equation}
\par Toward this end, the effective $\tt{SINR}$ at UE$_{\rm f}$ is the summation of the received $\tt{SINR}$s in the DT and CT phases using maximum ratio combining (MRC) \cite{Dinh_2020}. Based on this, the data rate of UE$_{\rm f}$ due to the MRC can be expressed as
\begin{equation}
    \mathcal{R}_{\mathrm{MRC}}^{\mathrm{HD}} = \frac{1}{2}~\mathrm{log}_{2}( 1 + {\tt{SINR}_{(1)}^{\mathrm{HD}}} + {\tt{SINR}_{(2)}^{\mathrm{HD}}}).
\end{equation}
\par However, the rate $\mathcal{R}_{\mathrm{MRC}}$ can be achievable if and only if UE$_{\rm n}$ has the ability to decode the message $\mathcal{S}_{\rm f}$ of UE$_{\rm f}$. Thus, the data rate achieved at UE$_{\rm f}$ is bounded by the data rate of UE$_{\rm n}$ to decode the message of UE$_{\rm f}$, i.e. $\mathcal{R}_{\rm n \longrightarrow f}$ \cite{Dinh_2020}. Therefore, the achievable data rate of UE$_{\rm f}$ to decode its own message can be given as
\begin{equation}
    \mathcal{R}_{\rm f \longrightarrow f}^{\mathrm{HD}} = \min(\mathcal{R}_{\rm n \longrightarrow f}^{\mathrm{HD}}, \mathcal{R}_{\mathrm{MRC}}^{\mathrm{HD}}). 
\end{equation}
\subsection{RIS-enabled FD C-NOMA}
For the case of RIS-enabled FD C-NOMA, both the DT and CT phases are executed simultaneously using the same radio channel. Due to this, UE$_{\rm n}$ suffers from a SI resulting from receiving data from the BS and transmitting data to the UE$_{\rm f}$ simultaneously within the same co-channel \cite{Liu_2017_Hybrid}. Hence, the received signal at UE$_{\rm n}$ is given by
\begin{equation}
    \mathcal{Y}_{\rm n}^{\mathrm{FD}} =  (h_{\rm bn} + \boldsymbol{h}_{\rm rn}^H \boldsymbol{\Theta}^{\mathrm{FD}} \boldsymbol{h}_{\rm br}) \mathcal{S}^{\mathrm{FD}} + h_{\mathrm{SI}}\tilde{\mathcal{S}}_{\rm f} + \mathcal{W}_{\rm n},
\end{equation}
where $h_{\mathrm{SI}}$ represents the SI channel coefficient at UE$_{\rm n}$ and $\tilde{\mathcal{S}}_{\rm f}$ represents the transmit signal by UE$_{\rm n}$ to UE$_{\rm f}$ \cite{Zhang2017Full}. This transmission of $\tilde{\mathcal{S}}_{\rm f}$ causes an interference at UE$_{\rm n}$. In addition, $\boldsymbol{\Theta}^{\mathrm{FD}}  = \mathrm{diag}\{e^{j \theta_1}, e^{j \theta_2}, \dots, e^{j \theta_M}\}$ is the phase-shift matrix, where for all $m \in \llbracket 1, M \rrbracket$, $\theta_m \in [0, 2\pi]$ represents the phase shift of the $m$th reflecting element of the RIS. Therefore, the achievable data rate of UE$_{\rm n}$ to decode the message of UE$_{\rm f}$ can be expressed as 
\begin{align}
    \mathcal{R}_{\rm n \longrightarrow f}^{{\mathrm{FD}}} = \mathrm{log}_2 \left[ 1 + \frac{\alpha_{\rm f}^{\mathrm{FD}} P_{\mathrm{BS}}|h_{\rm bn} + \boldsymbol{h}^H_{\rm rn}\boldsymbol{\Theta}^{\mathrm{FD}}\boldsymbol{h}_{\rm br}|^2}{ \alpha_{\rm n}^{\mathrm{FD}} P_{\mathrm{BS}}|h_{\rm bn} + \boldsymbol{h}^H_{\rm rn}\boldsymbol{\Theta}^{\mathrm{FD}}\boldsymbol{h}_{\rm br}|^2 + \beta^{\mathrm{FD}} P_{\rm n} \gamma_{\mathrm{SI}} + \sigma_{\rm n}^2}\right], \label{Eq: FD R_n_f}
\end{align}
where $\gamma_{\mathrm{SI}} = |h_{\text{SI}}|^2$. Then, after the successive decoding and cancelling the signal of UE$_{\rm f}$, UE$_{\rm n}$ decodes its own signal $\mathcal{S}_{\rm n}$. Consequently, the achievable data rate of UE$_{\rm n}$ to decode its own signal can be given as 
\begin{align}
    \mathcal{R}^{\mathrm{FD}}_{\rm n \longrightarrow n} = \mathrm{log}_2\left[ 1 + \frac{\alpha_{\rm n}^{\mathrm{FD}} P_{\mathrm{BS}}|h_{\rm bn} + \boldsymbol{h}^{H}_{\rm rn}\boldsymbol{\Theta}^{\mathrm{FD}}\boldsymbol{h}_{\rm br}|^2}{\beta^{\mathrm{FD}} P_{\rm n} \gamma_{\text{SI}} + \sigma_{\rm n}^2}\right].\label{Eq: FD R_n_n}
\end{align}
Afterwards, UE$_{\rm n}$ forwards the signal $\tilde{\mathcal{S}}_{\rm f}$ to UE$_{\rm f}$. Thus, the received signal at UE$_{\rm f}$ is given by
\begin{equation}
    \mathcal{Y}^{\mathrm{FD}} = (h_{\rm bf} + \boldsymbol{h}^H_{\rm rf}\boldsymbol{\Theta}^{\mathrm{FD}}\boldsymbol{h}_{\rm br})\mathcal{S}^{\mathrm{FD}} + \beta^{\mathrm{FD}} P_{\rm n} (h_{\rm nf} + \boldsymbol{h}^H_{\rm rf}\boldsymbol{\Theta}^{\mathrm{FD}}\boldsymbol{h}_{\rm nr}) \tilde{\mathcal{S}}_{\rm f} + \mathcal{W}_{\rm f}, \label{FD_Y_F}
\end{equation}
where the first term in \eqref{FD_Y_F} is due to the transmission of the BS, whereas the second term results from the transmission of the near user over the D2D communication link. Note that, in FD scenario, $\boldsymbol{h}_{\rm nr} \triangleq \boldsymbol{h}_{\rm rn}$. Moreover,  UE$_{\rm f}$ receives its own message from the transmission of both BS and UE$_{\rm n}$ at approximately the same channel use \cite{Elhattab2020AJoint, Wei2018Energy, Zhang2019Resource}. Hence, according to \cite{Elhattab2020AJoint, Wei2018Energy, Zhang2019Resource}, UE$_{\rm f}$ can successfully align, co-phase, and combine the signal $\mathcal{S}_{\rm f}$ transmitted from the BS and the signal $\tilde{\mathcal{S}}_{\rm f}$ (which is in fact the signal $\mathcal{S}_{\rm f}$ of UE$_{\rm f}$ decoded by UE$_{\rm n}$) forwarded from UE$_{\rm n}$. Consequently, based on the above discussion and on the results of \cite{Elhattab2020AJoint, Wei2018Energy, Zhang2019Resource}, the achievable data rate of UE$_{\rm f}$ can be expressed as
\begin{align}
    \mathcal{R}^{\mathrm{FD}}_{\mathrm{MRC}} = \mathrm{log}_2 \left[ 1 + \frac{\alpha_{\rm f}^{\mathrm{FD}} P_{\mathrm{BS}}|h_{\rm bf} + \boldsymbol{h}^H_{\rm rf}\boldsymbol{\Theta}^{\mathrm{FD}}\boldsymbol{h}_{\rm br}|^2 + \beta^{\mathrm{FD}} P_{\rm n} |h_{\rm nf} + \boldsymbol{h}^H_{\rm rf}\boldsymbol{\Theta}^{\mathrm{FD}}\boldsymbol{h}_{\rm nr}|^2 }{ \alpha_{\rm n}^{\mathrm{FD}} P_{\mathrm{BS}}|h_{\rm bf} + \boldsymbol{h}^H_{\rm rf}\boldsymbol{\Theta}^{\mathrm{FD}}\boldsymbol{h}_{\rm br}|^2 +  \sigma^2_{\rm f}}\right]. \label{Eq: FD R_MRC} 
\end{align}
\par Based on this analysis and according to what was explained in the HD case, the achievable data rate of UE$_{\rm f}$ to decode its own message can be given by
\begin{equation}
    \mathcal{R}_{\rm f \longrightarrow f}^{\mathrm{FD}} = \min(\mathcal{R}_{\rm \rm n \longrightarrow f}^{\mathrm{FD}}, \mathcal{R}^{\mathrm{FD}}_{\mathrm{MRC}}). 
\end{equation}
\section{RIS-Enabled HD C-NOMA: Problem Formulation and Solution Approach}
\label{Sec: HD-CNOMA RIS}
\subsection{Problem Formulation}
With the quest of improving the performance of the proposed RIS-enabled HD C-NOMA, an optimization problem is formulated with the objective of minimizing the total transmit power by the BS and the near user. This proposed framework includes two different objectives: 1) Power control for both the BS and the near user and 2) Phase-shift for the RIS, i.e., finding the best RIS configuration in the first and second time slots to enhance the system performance. By optimizing the power allocation coefficients at the BS $(\alpha^{\mathrm{HD}}_{\rm n}, \alpha^{\mathrm{HD}}_{\rm f})$, the power fraction coefficient at the near user $\beta^{\mathrm{HD}}$, and the phase-shift matrices for the RIS $\boldsymbol{\theta}_{(1)} = [\theta_1^{(1)}, \theta_2^{(1)}, \dots, \theta_M^{(1)}]$ and $\boldsymbol{\theta}_{(2)} = [\theta_1^{(2)}, \theta_2^{(2)}, \dots, \theta_M^{(2)}]$, the total transmit power minimization problem for the proposed RIS-enabled HD C-NOMA framework can be formulated as follows.
\allowdisplaybreaks
\begingroup
\begin{subequations}
\label{prob:ROS HD C_NOMA}
\begin{align}
&\mathrm{OPT}-\mathrm{HD}: \min_{ \substack{\boldsymbol{\theta}_{(1)}, \boldsymbol{\theta}_{(2)}, \\ \alpha_{\rm n}^{\mathrm{HD}},\alpha_{\rm f}^{\mathrm{HD}}, \beta^{\mathrm{HD}}}}   (\alpha_{\rm n}^{\mathrm{HD}} + \alpha_{\rm f}^{\mathrm{HD}}) P_{\mathrm{BS}} + \beta^{\mathrm{HD}} P_{\rm n}, \\
&\text{s.t.}\,\, 0 \leq \alpha_{\rm n}^{\mathrm{HD}} \leq \alpha_{\rm f}^{\mathrm{HD}}, \label{P1_C1}\\
&\quad \,\,\, 0 \leq \alpha_{\rm n}^{\mathrm{HD}} + \alpha_{\rm f}^{\mathrm{HD}}  \leq 1, \label{P1_C2}\\
&\quad \,\,\, 0 \leq \beta^{\mathrm{HD}} \leq 1, \label{P1_C3}\\
&\quad \,\,\, \mathcal{R}_{\rm n \longrightarrow n}^{\mathrm{HD}} \geq  R^{\rm th}_{\rm n}, \label{P1_C4}\\
&\quad \,\,\, \mathcal{R}_{\mathrm{MRC}}^{\mathrm{HD}} \geq R_{\rm f}^{\rm th}, \label{P1_C5}\\ 
&\quad \,\,\, \mathcal{R}_{\rm n \longrightarrow f}^{\mathrm{HD}} \geq R_{\rm f}^{\rm th}, , \label{P1_C6}\\
&\quad \,\,\, 0\leq \theta _{m}^{(1)} \leq 2\pi, \quad \forall\,\, m \in \llbracket 1, M \rrbracket \label{P1_C7}\\
&\quad \,\,\, 0\leq \theta _{m}^{(2)} \leq 2\pi, \quad \forall\,\, m \in \llbracket 1, M \rrbracket \label{P1_C8}
\end{align}
\end{subequations}
\endgroup
where constraint \eqref{P1_C1} represents the SIC constraint, and constraints \eqref{P1_C2} and \eqref{P1_C3} guarantee that the total transmit power by the BS and UE$_{\rm n}$ do not exceed their power budget, respectively. Constraints \eqref{P1_C4}-\eqref{P1_C6} represent the QoS constraints for UE$_{\rm n}$ and UE$_{\rm f}$, where $R_{\rm n}^{\mathrm{th}}$ and $R_{\rm f}^{\mathrm{th}}$ represent the required QoS in terms of minimum data rate for UE$_{\rm n}$ and UE$_{\rm f}$, respectively. It can be seen that it is challenging to solve $\mathrm{OPT-HD}$ directly due to the high coupling between the power allocation coefficients ($\alpha_{\rm n}^{\mathrm{HD}}, \alpha_{\rm f}^{\mathrm{HD}}, \beta^{\mathrm{HD}}$) and the RIS phase-shift coefficients ($\boldsymbol{\theta}_{(1)}, \boldsymbol{\theta}_{(2)}$) as well as the non convexity of constraints \eqref{P1_C4}-\eqref{P1_C6}. Therefore, problem $\mathrm{OPT-HD}$ is hard to be solved by common standard optimization techniques. Thus, it is necessary to transform problem $\mathrm{OPT-HD}$ into tractable sub-problems that can be solved alternatively. Toward this end, the alternating optimization approach is utilized to solve $\mathrm{OPT-HD}$ in an efficient manner.
\par Accordingly, unlike \cite{zuo2020reconfigurable}, which decomposes the main problem into the DT sub-problem and CT sub-problem, we divide problem $\mathrm{OPT-HD}$ into two sub-problems, i.e. power control optimization sub-problem and RIS passive beamforming (phase-shift coefficients) optimization sub-problem. This is because we seek for deriving optimal closed-form expressions for the power control coefficients $(\alpha_{\rm n}^{\mathrm{HD}}, \alpha_{\rm f}^{\mathrm{HD}}, \beta^{\mathrm{FD}})$ as a function of the phase-shift coefficients ($\boldsymbol{\theta}_{(1)}, \boldsymbol{\theta}_{(2)}$). In particular, we start by optimizing the power allocation coefficients at the BS and the power fraction coefficient at the near UE given the phase-shift coefficients $\boldsymbol{\theta}_{(1)}$ and $\boldsymbol{\theta}_{(2)}$ at the RIS in the DT and CT phases. Then, once the optimal power allocation coefficients at the BS and at UE$_{\rm n}$ are obtained, we direct our attention to the optimal phase-shift coefficients of the RIS at the DT and CT phases. Based on the above discussion, the power control optimization problem can be written as
\allowdisplaybreaks
\begingroup
\begin{subequations}
\label{prob: Power Control}
\begin{align}
&\mathrm{PC-HD}: \min_{ \alpha_{\rm n}^{\mathrm{HD}},\alpha_{\rm f}^{\mathrm{HD}}, \beta^{\mathrm{HD}}}   (\alpha_{\rm n}^{\mathrm{HD}} + \alpha_{\rm f}^{\mathrm{HD}}) P_{\mathrm{BS}} + \beta^{\mathrm{HD}} P_{\rm n}, \\
&\text{s.t.}\,\, \eqref{P1_C1}-\eqref{P1_C6},
\end{align}
\end{subequations}
\endgroup
whereas the passive beamforming optimization problem can be presented as
\allowdisplaybreaks
\begingroup
\begin{subequations}
\label{prob: Phase-shift Opt}
\begin{align}
&\mathrm{PS-HD}: \mathrm{Find}  ~~ \boldsymbol{\theta}_{(1)}, \boldsymbol{\theta}_{(2)}  \\
&\text{s.t.}\,\, \eqref{P1_C4}-\eqref{P1_C8}
\end{align}
\end{subequations}
\endgroup
\subsection{RIS-enabled HD C-NOMA: Power Control Optimization}
\label{sec:PC-HD}
In this part, we assume that the phase shift matrices $\boldsymbol{\Theta}_{(1)}^{\mathrm{HD}}$ and $\boldsymbol{\Theta}_{(2)}^{\mathrm{HD}}$ are fixed. Based on this, we denote by 
\begin{equation}
    \gamma_{\rm bn} \triangleq \frac{P_{\mathrm{BS}}|h_{\rm bn} + \boldsymbol{h}^H_{\rm rn}\boldsymbol{\Theta}_{(1)}^{\mathrm{HD}}\boldsymbol{h}_{\rm br}|^2}{\sigma_{\rm n}^2},\,\,
    \gamma_{\rm bf} \triangleq \frac{P_{\mathrm{BS}}|h_{\rm bf} + \boldsymbol{h}^H_{\rm rf}\boldsymbol{\Theta}_{(1)}^{\mathrm{HD}}\boldsymbol{h}_{\rm br}|^2 }{\sigma_{\rm f}^2}, \,\, \text{and}\,\,
    \gamma_{\rm d} \triangleq \frac{P_{\rm n} |h_{\rm nf} + \hat{\boldsymbol{h}}^H_{\rm rf}\boldsymbol{\Theta}_{(2)}^{\mathrm{HD}}\boldsymbol{h}_{\rm nr}|^2}{\sigma_{\rm f}^2}.
\end{equation}
Before deriving the optimal power control of problem $\mathrm{PC-HD}$, one needs to specify its feasibility conditions. The  feasibility  conditions  of  problem $\mathrm{PC-HD}$ define  the conditions under  which at least  one feasible solution  for this problem does exist. In addition, a feasible solution for problem $\mathrm{PC-HD}$ defines a solution  that satisfies the constraint of this problem. In this context, the feasibility conditions of problem $\mathrm{PC-HD}$ are presented in the following theorem.
\begin{theorem}
Problem $\mathrm{PC-HD}$ is feasible if and only if the following conditions hold.
\begin{subequations}
\label{eq:cdts_HD}
\begin{align}
    &\text{Condition 1:} \,\, \alpha_{\min}^{\rm HD} \leq \alpha_{\max}^{\rm HD}, \label{eq:cdt1}\\
    &\text{Condition 2:} \,\, \beta_{\min}^{\rm HD} \leq \beta_{\max}^{\rm HD}, \label{eq:cdt2}
\end{align}
\end{subequations}
where $\alpha_{\min}^{\rm HD}$, $\alpha_{\max}^{\rm HD}$, $\beta_{\min}^{\rm HD}$ and $\beta_{\max}^{\rm HD}$ are expressed, respectively, as
\begin{equation}
    \label{HD_bounds}
    \begin{aligned}
    &\alpha_{\min}^{\rm HD} = \frac{t_{\rm n}^{\rm HD}}{\gamma_{\rm bn}},  &\alpha_{\max}^{\rm HD} = \min \left(0, \frac{\gamma_{\rm bn} - t_{\rm f}^{\rm HD}}{\gamma_{\rm bn} \left(t_{\rm f}^{\rm HD}+1\right)} \right), \\ 
    &\beta_{\min}^{\rm HD} = \max \left(0, \frac{1}{\gamma_{\rm d}} \left(t_{\rm f}^{\rm HD} - \frac{\gamma_{\rm bn} - t_{\rm n}^{\rm HD}}{t_{\rm n}^{\rm HD} + \frac{\gamma_{\rm bn}}{\gamma_{\rm bf}}} \right) \right), &\beta_{\max}^{\rm HD} = 1,
    \end{aligned}
\end{equation}
such that $t_{\rm n}^{\rm HD} = 2^{2 R_{\rm n}^{\rm th}} - 1$ and $t_{\rm f}^{\rm HD} = 2^{2 R_{\rm n}^{\rm th}} - 1$.
\end{theorem}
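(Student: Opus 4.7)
The plan is to convert every rate inequality $\mathcal{R}\geq R^{\rm th}$ into an algebraic \texttt{SINR} inequality using the monotonicity of $\log_2(1+\cdot)$, and then analyze feasibility variable by variable. The structural observation that drives the decomposition is that \eqref{P1_C4} involves only $\alpha_{\rm n}^{\rm HD}$, \eqref{P1_C6} involves only the two BS power coefficients $(\alpha_{\rm n}^{\rm HD},\alpha_{\rm f}^{\rm HD})$, and \eqref{P1_C5} is the unique constraint that couples $\beta^{\rm HD}$ to the BS side. This allows me to characterize the $(\alpha_{\rm n}^{\rm HD},\alpha_{\rm f}^{\rm HD})$-feasibility region first and then to determine $\beta^{\rm HD}$-feasibility by evaluating \eqref{P1_C5} at the $\alpha$-choice that most favors it.

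For the $\alpha$-step, \eqref{P1_C4} reads $\alpha_{\rm n}^{\rm HD}\gamma_{\rm bn}\geq t_{\rm n}^{\rm HD}$, giving immediately the lower bound $\alpha_{\rm n}^{\rm HD}\geq\alpha_{\min}^{\rm HD}$. Constraint \eqref{P1_C6} rearranges to $\alpha_{\rm f}^{\rm HD}\gamma_{\rm bn}\geq t_{\rm f}^{\rm HD}(\alpha_{\rm n}^{\rm HD}\gamma_{\rm bn}+1)$, which upon substitution into the budget inequality \eqref{P1_C2} produces $\alpha_{\rm n}^{\rm HD}(1+t_{\rm f}^{\rm HD})\leq 1 - t_{\rm f}^{\rm HD}/\gamma_{\rm bn}$; combining this with $\alpha_{\rm n}^{\rm HD}\geq 0$ yields exactly the upper bound $\alpha_{\max}^{\rm HD}$. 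So the existence of some admissible $\alpha_{\rm n}^{\rm HD}$ is equivalent to Condition 1. At the extremal pair $\alpha_{\rm n}^{\rm HD}=\alpha_{\min}^{\rm HD}$, $\alpha_{\rm f}^{\rm HD}=1-\alpha_{\min}^{\rm HD}$ I will verify that the SIC constraint \eqref{P1_C1} is also met; this reduces to $\alpha_{\min}^{\rm HD}\leq 1/2$, which follows from $t_{\rm f}^{\rm HD}(1+t_{\rm n}^{\rm HD})\leq\gamma_{\rm bn}-t_{\rm n}^{\rm HD}$, the rearranged form of Condition 1.

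For the $\beta^{\rm HD}$-step, constraint \eqref{P1_C5} rearranges to $\beta^{\rm HD}\geq\gamma_{\rm d}^{-1}\bigl(t_{\rm f}^{\rm HD}-\alpha_{\rm f}^{\rm HD}\gamma_{\rm bf}/(\alpha_{\rm n}^{\rm HD}\gamma_{\rm bf}+1)\bigr)$. The fraction on the right is decreasing in $\alpha_{\rm n}^{\rm HD}$ and increasing in $\alpha_{\rm f}^{\rm HD}$, so the right-hand side is minimized over the $\alpha$-feasible polytope at the corner $(\alpha_{\min}^{\rm HD},1-\alpha_{\min}^{\rm HD})$ identified above. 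Evaluating and simplifying the resulting rational expression yields precisely the stated $\beta_{\min}^{\rm HD}$; the outer $\max\{0,\cdot\}$ absorbs the non-negativity part of \eqref{P1_C3}. Combining this lower bound with the upper bound $\beta^{\rm HD}\leq 1=\beta_{\max}^{\rm HD}$ from \eqref{P1_C3} then makes $\beta$-feasibility equivalent to Condition 2, and exhibiting the triple $(\alpha_{\min}^{\rm HD},1-\alpha_{\min}^{\rm HD},\beta_{\min}^{\rm HD})$ under both conditions supplies the converse direction.

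The main obstacle I expect is the joint minimization argument in the last step: although the two monotonicities in $\alpha_{\rm n}^{\rm HD}$ and $\alpha_{\rm f}^{\rm HD}$ are individually transparent, the two variables are coupled through \eqref{P1_C2} and \eqref{P1_C6}, so I must argue that the minimum of the right-hand side of \eqref{P1_C5} over the full two-dimensional feasible region is attained at the specific corner and not at some interior point where \eqref{P1_C6} is slack. This will be handled by showing that saturating \eqref{P1_C2} (i.e.\ $\alpha_{\rm f}^{\rm HD}=1-\alpha_{\rm n}^{\rm HD}$) is always preferable and that, along this edge, the objective is monotone decreasing in $\alpha_{\rm f}^{\rm HD}$, forcing the minimizer to the vertex at $\alpha_{\rm n}^{\rm HD}=\alpha_{\min}^{\rm HD}$.
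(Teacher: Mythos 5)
Your proposal is correct and follows essentially the same route as the paper's Appendix A: translate the rate constraints into linear/rational inequalities in $(\alpha_{\rm n}^{\rm HD},\alpha_{\rm f}^{\rm HD},\beta^{\rm HD})$, extract the interval $[\alpha_{\min}^{\rm HD},\alpha_{\max}^{\rm HD}]$ for $\alpha_{\rm n}^{\rm HD}$ from the QoS, SIC and budget constraints, and obtain $\beta_{\min}^{\rm HD}$ by evaluating the MRC constraint at the corner $(\alpha_{\min}^{\rm HD},\,1-\alpha_{\min}^{\rm HD})$. The only substantive additions are your explicit monotonicity argument that this corner minimizes the $\beta^{\rm HD}$ lower bound over the entire two-dimensional $\alpha$-polytope (which the paper merely asserts) and your implicit reading of $\alpha_{\max}^{\rm HD}$ as $\min\left(\tfrac{1}{2},\tfrac{\gamma_{\rm bn}-t_{\rm f}^{\rm HD}}{\gamma_{\rm bn}(t_{\rm f}^{\rm HD}+1)}\right)$, consistent with the derivation in the paper's appendix rather than the evidently mistyped $\min(0,\cdot)$ in the theorem statement.
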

\begin{proof}
See Appendix A.
\end{proof}
Afterwards, assuming that problem $\mathrm{PC-HD}$ is feasible, its optimal solution is given in the following theorem. 
\allowdisplaybreaks
\begingroup
\begin{theorem}
Assuming that problem $\mathrm{PC-HD}$ is feasible, i.e., conditions \eqref{eq:cdt1} and \eqref{eq:cdt2} hold, its optimal solution is expressed as follows. Let $\mathbf{p}_1^{\rm HD}$ and $\mathbf{p}_2^{\rm HD}$ denote the power control scheme expressed, respectively, as
\begin{subequations}
\begin{align}
    &\mathbf{p}_1^{\rm HD} = \left(\alpha_{\rm n,1}^{\rm HD},\alpha_{\rm f, 1}^{\rm HD}, \beta_{1}^{\rm HD} \right) = \left(\alpha_{\min}^{\rm HD}, \max\left(\alpha_{\min}^{\rm HD},\alpha_{\min}^{\rm HD} t_{\rm f}^{\rm HD} + \frac{t_{\rm f}^{\rm HD}}{\gamma_{\rm n}}\right),\frac{1}{\gamma_{\rm d}} \left(t_{\rm f}^{\rm HD} - \frac{\alpha_{\rm f,1}^{\rm HD} \gamma_{\rm f}}{\alpha_{\rm n,1}^{\rm HD}\gamma_{\rm f}+1} \right) \right),\\
    &\mathbf{p}_2^{\rm HD} = \left(\alpha_{\rm n, 2}^{\rm HD},\alpha_{\rm f, 2}^{\rm HD}, \beta_{2}^{\rm HD} \right) = \left(\alpha_{\min}^{\rm HD}, \frac{\left(\alpha_{\min}^{\rm HD}\gamma_{\rm bn} + 1\right)t_{\rm f}^{\rm HD}}{\gamma_{\rm bn}},0 \right).
\end{align}
\end{subequations}
Then, the optimal power control scheme is expressed as
\begin{equation}
    \mathbf{p}^{{\rm HD}^*} = \left(\alpha_{\rm n}^{{\rm HD}^*}, \alpha_{\rm f}^{{\rm HD}^*},\beta^{{\rm HD}^*} \right) = \argmin_{\mathbf{p}^{\rm HD} \in \left\{\mathbf{p}_1^{\rm HD}, \mathbf{p}_2^{\rm HD}\right\}}  f \left(\mathbf{p}^{\rm HD} \right),
\end{equation}
where the function $f$ is the objective function of problem $\mathrm{PC-HD}$, i.e., 
\begin{equation}
    f \left(\mathbf{p}^{\rm HD} \right) = f\left(\alpha_{\rm n}^{\mathrm{HD}},\alpha_{\rm f}^{\mathrm{HD}}, \beta^{\mathrm{HD}}\right) = (\alpha_{\rm n}^{\mathrm{HD}} + \alpha_{\rm f}^{\mathrm{HD}}) P_{\mathrm{BS}} + \beta^{\mathrm{HD}} P_{\rm n}.
\end{equation}
\end{theorem}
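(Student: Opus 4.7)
The plan is to exploit the fact that problem $\mathrm{PC-HD}$ is a linear program in disguise: after rewriting the three rate constraints as linear inequalities in the power variables, everything (objective plus all constraints) becomes linear, so the optimum sits at a vertex of the feasible polytope. The proof then reduces to identifying which vertex minimizes the objective. My strategy is a two-stage reduction: first eliminate $\alpha_{\rm n}^{\mathrm{HD}}$, then enumerate vertices of the residual 2D polytope in $(\alpha_{\rm f}^{\mathrm{HD}}, \beta^{\mathrm{HD}})$.

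First, I would show that any optimal solution must satisfy $\alpha_{\rm n}^{\mathrm{HD}} = \alpha_{\min}^{\mathrm{HD}}$. The objective is strictly increasing in $\alpha_{\rm n}^{\mathrm{HD}}$. Moreover, inspecting the SINR expressions, $\alpha_{\rm n}^{\mathrm{HD}}$ appears as an interfering contribution in the denominators of both $\mathtt{SINR}_{\rm n\to f}^{\mathrm{HD}}$ and $\mathtt{SINR}_{(1)}^{\mathrm{HD}}$, so decreasing $\alpha_{\rm n}^{\mathrm{HD}}$ only relaxes constraints \eqref{P1_C5} and \eqref{P1_C6}. The SIC-ordering constraint \eqref{P1_C1} bounds $\alpha_{\rm n}^{\mathrm{HD}}$ from above (by $\alpha_{\rm f}^{\mathrm{HD}}$), and the only binding lower bound comes from \eqref{P1_C4}, which rearranges to $\alpha_{\rm n}^{\mathrm{HD}} \geq t_{\rm n}^{\mathrm{HD}}/\gamma_{\rm bn} = \alpha_{\min}^{\mathrm{HD}}$. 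Hence the minimum forces equality.

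Second, with $\alpha_{\rm n}^{\mathrm{HD}}$ fixed, the residual problem in $(\alpha_{\rm f}^{\mathrm{HD}}, \beta^{\mathrm{HD}})$ is a genuine 2D LP: the objective is $\alpha_{\rm f}^{\mathrm{HD}} P_{\mathrm{BS}} + \beta^{\mathrm{HD}} P_{\rm n}$ (up to a constant), and each rate constraint, once the $\log_2$ and the fractional SINR are inverted, becomes a linear inequality in $(\alpha_{\rm f}^{\mathrm{HD}}, \beta^{\mathrm{HD}})$. The relevant lower-bound constraints are: (a) SIC ordering $\alpha_{\rm f}^{\mathrm{HD}} \geq \alpha_{\min}^{\mathrm{HD}}$; (b) relay-decoding $\alpha_{\rm f}^{\mathrm{HD}} \geq \alpha_{\min}^{\mathrm{HD}} t_{\rm f}^{\mathrm{HD}} + t_{\rm f}^{\mathrm{HD}}/\gamma_{\rm bn}$; (c) the MRC constraint, which is the unique inequality coupling both variables; (d) $\beta^{\mathrm{HD}} \geq 0$. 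Since the objective coefficients are strictly positive, an optimal vertex must have at least one $\alpha_{\rm f}$-type constraint (a) or (b) active together with at least one $\beta$-type constraint (c) or (d) active. This yields exactly two candidate vertices, corresponding to (c)$\wedge$(a\,or\,b) and to (d)$\wedge$(a\,or\,b).

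The first candidate, $\mathbf{p}_1^{\mathrm{HD}}$, takes $\alpha_{\rm f}^{\mathrm{HD}} = \max(\alpha_{\min}^{\mathrm{HD}}, \alpha_{\min}^{\mathrm{HD}} t_{\rm f}^{\mathrm{HD}} + t_{\rm f}^{\mathrm{HD}}/\gamma_{\rm bn})$ (activating whichever of (a), (b) is tighter) and then solves the tight MRC equation for $\beta^{\mathrm{HD}}$. The second candidate, $\mathbf{p}_2^{\mathrm{HD}}$, fixes $\beta^{\mathrm{HD}} = 0$ and then takes the smallest $\alpha_{\rm f}^{\mathrm{HD}}$ compatible with the remaining constraints; feasibility of this branch requires $\beta_{\min}^{\mathrm{HD}} = 0$, which is the condition under which the MRC constraint at $\beta = 0$ is already implied by (a), (b) from Theorem~1. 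I would then conclude by noting that the minimum of a linear function over a polytope equals the minimum over its vertices, so $\mathbf{p}^{{\mathrm{HD}}^*} = \argmin\{f(\mathbf{p}_1^{\mathrm{HD}}), f(\mathbf{p}_2^{\mathrm{HD}})\}$.

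The main obstacle will be the case analysis and bookkeeping at step three: verifying that $\mathbf{p}_1^{\mathrm{HD}}$ automatically satisfies $\beta^{\mathrm{HD}} \in [0,1]$ under the feasibility conditions \eqref{eq:cdts_HD}, checking that $\mathbf{p}_2^{\mathrm{HD}}$ is considered only when it is feasible (else it is dropped and $\mathbf{p}_1^{\mathrm{HD}}$ is automatically optimal), and ruling out the upper-bound vertices ($\beta^{\mathrm{HD}} = 1$ or $\alpha_{\rm n}^{\mathrm{HD}} + \alpha_{\rm f}^{\mathrm{HD}} = 1$) as non-minimizers since the objective is increasing. The algebra on the MRC inequality is the most delicate piece because the SINR is nonlinear in the powers, but clearing the denominator $\alpha_{\rm n}^{\mathrm{HD}} \gamma_{\rm bf} + 1$ (which is positive) linearizes it cleanly.
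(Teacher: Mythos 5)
Your proposal is correct and follows essentially the same route as the paper's proof: fix $\alpha_{\rm n}^{\rm HD}$ at its lower bound $\alpha_{\min}^{\rm HD}$ (it only enters the remaining rate constraints as interference, so reducing it is always feasible and beneficial), then compare the two candidate points obtained by making the MRC constraint tight versus setting $\beta^{\rm HD}=0$. Your LP-vertex-enumeration framing is a somewhat more systematic justification of what the paper calls the ``two directions'' of minimization, but the candidate points and the final $\argmin$ step coincide with the paper's.
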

\endgroup
\begin{proof}
See Appendix B.
\end{proof}
\subsection{RIS-enabled HD C-NOMA: Phase-Shift Coefficients Optimization}
In this subsection, the phase-shift coefficients in both DT and CT phases are optimized with given values of $(\alpha_{\rm n}^{\mathrm{HD}}, \alpha_{\rm f}^{\mathrm{HD}}, \beta^{\mathrm{HD}})$. One can see that $\mathrm{PS-HD}$ is a feasibility check problem (finding the phase-shift coefficient for each element such that the QoS constraints are satisfied). Note that, the RIS provides additional paths to construct a stronger combined channel gain at the intended receiver. Therefore, the best channel gain in the second time-slot can be achieved when the reflected signals from the RIS can be constructively added at UE$_{\rm f}$ and be co-phased with the direct D2D link from UE$_{\rm n}$ to UE$_{\rm f}$. As a result, optimal phase-shift coefficients in CT phase can be obtained as follows \cite{Elhattab_2020, Bjornson_2020_Intelligent}.
\begin{equation}
    \theta_m^{(2)} = \mathrm{arg}(h_{\rm nf}) - \mathrm{arg}([\boldsymbol{h}_{\rm nr}]_m [\hat{\boldsymbol{h}}_{\rm rf}]_m). 
\end{equation}
Consequently, $\tt{SINR}_{(2)}^{\mathrm{HD}}$ can be expressed as
\begin{equation}
    {\tt{SINR}}_{(2)}^{\mathrm{HD}} = \frac{\beta^{\mathrm{HD}} P_{\rm n} (|h_{\rm nf}| + \sum_{m = 1}^{M} |[\hat{\boldsymbol{h}}_{\rm rf}]_m [\boldsymbol{h}_{\rm nr}]_m|)^2}{\sigma_{\rm f}^2}.
\end{equation}
After obtaining the optimal value for $\theta_{(2)}^{\mathrm{HD}}$, the SDR technique is applied for obtaining $\theta_{(1)}^{\mathrm{HD}}$. Let us start by defining $\boldsymbol {v} = [v_{1}, \dots, v_{M}]^{H}$, where for all $m \in \llbracket 1, M \rrbracket$, $v_{m} = e^{j\theta _{m}}$. Then, the constraints in \eqref{P1_C7} are equivalent to the unit-modulus constraints, i.e., $|v_{m}|^{2}=1$ for all $m \in \llbracket 1, M \rrbracket$. By applying the change of variables $\boldsymbol {h}^{H}_{\rm rn} \boldsymbol{\Theta}_{(1)}^{\mathrm{HD}} \boldsymbol{h}_{\rm br} = \boldsymbol {v}^{H}\boldsymbol {\Phi}$, where $\boldsymbol{\Phi}= \mathrm{diag}(\boldsymbol {h}^{H}_{\rm rn})\boldsymbol {h}_{\rm br} \in \mathbb {C}^{M \times 1}$ and $\boldsymbol {h}^{H}_{\rm rf} \boldsymbol{\Theta} \boldsymbol {h}_{\rm br} = \boldsymbol {v}^{H}\boldsymbol {\Psi}$, where $\boldsymbol {\Psi }=\text {diag}(\boldsymbol {h}^{H}_{\rm rf})\boldsymbol {h}_{\rm br} \in \mathbb {C}^{M \times 1}$, we have
\begin{equation}
    |h_{\rm bn} + \boldsymbol {h}^{H}_{\rm rn} \boldsymbol{\Theta} \boldsymbol {h}_{\rm br}|^{2} =|h_{\rm bn} + \boldsymbol {v}^{H}\boldsymbol {\Phi}|^{2} \quad \text{and} \quad
    |h_{\rm bf} + \boldsymbol {h}^{H}_{\rm rf} \boldsymbol{\Theta} \boldsymbol {h}_{\rm br}|^{2} =|h_{\rm bf} + \boldsymbol {v}^{H}\boldsymbol {\Psi}|^{2}. 
\end{equation}
By introducing an auxiliary variable $t$, then an equivalent representation of the achievable data rates $\mathcal{R}^{\mathrm{HD}}_{\rm n \longrightarrow n}, \mathcal{R}^{\mathrm{HD}}_{\mathrm{MRC}}, $ and $\mathcal{R}^{\mathrm{HD}}_{\rm n \longrightarrow f}$ can be expressed as follows 
\begin{align}
    \mathcal{R}^{\mathrm{HD}}_{\rm n \longrightarrow n} = & \mathrm{log}\left( 1 + \frac{\alpha_{\rm n}^{\mathrm{HD}} P_{\mathrm{BS}} (\bar{\boldsymbol {v}}^{H}\boldsymbol {Q}_{\rm bn}\bar{\boldsymbol {v}} + | {h}_{\rm bn}|^{2})}{\sigma_{\rm n}^2}\right), \\
    \mathcal{R}^{\mathrm{HD}}_{\mathrm{MRC}} = & \mathrm{log}\left(1 + \frac{\alpha_{\rm f}^{\mathrm{HD}} P_{\mathrm{BS}} (\bar{\boldsymbol {v}}^{H}\boldsymbol {Q}_{\rm bf}\bar{\boldsymbol {v}} + |{h}_{\rm bf}|^{2}) }{\alpha_{\rm n}^{\mathrm{HD}} P_{\mathrm{BS}} (\bar{\boldsymbol {v}}^{H}\boldsymbol {Q}_{\rm bf}\bar{\boldsymbol {v}} + |{h}_{\rm bf}|^{2}) + \sigma_{\rm f}^2} + {\tt{SINR}}_{(2)}^{\mathrm{HD}}\right), \\  \mathcal{R}^{\mathrm{HD}}_{\rm n \longrightarrow f} = & \mathrm{log}\left( 1 + \frac{\alpha^{\mathrm{HD}}_{\rm f} P_{\mathrm{BS}} (\bar{\boldsymbol {v}}^{H}\boldsymbol {Q}_{\rm bn}\bar{\boldsymbol {v}} + | {h}_{\rm bn}|^{2})}{\alpha^{\mathrm{HD}}_{\rm n} P_{\mathrm{BS}} (\bar{\boldsymbol {v}}^{H}\boldsymbol {Q}_{\rm bn}\bar{\boldsymbol {v}} + | {h}_{\rm bn}|^{2}) + \sigma_{\rm n}^2}\right),
\end{align}
where 
\begin{equation} \boldsymbol {Q}_{\rm bn}=\begin{bmatrix} \boldsymbol {\Phi }\boldsymbol {\Phi }^{H} &\boldsymbol {\Phi } {h}^H_{\rm bn} \\ {h}_{\rm bn}\boldsymbol {\Phi }^{H} &0 \\ \end{bmatrix}, \quad \boldsymbol {Q}_{\rm bf}=\begin{bmatrix} \boldsymbol {\Psi }\boldsymbol {\Psi }^{H} &\boldsymbol {\Psi } {h}^H_{\rm bf} \\ {h}_{\rm bf}\boldsymbol {\Psi }^{H} & 0 \\ \end{bmatrix} \quad \text{and}  \quad \bar{\boldsymbol {v}}=\begin{bmatrix} \boldsymbol {v} \\ t \\ \end{bmatrix}.\end{equation} 
\par Note that $\bar{\boldsymbol {v}}^{H}\boldsymbol {Q}_{z}\bar{\boldsymbol {v}}={\mathrm{tr}}(\boldsymbol {Q}_{z}\bar{\boldsymbol {v}}\bar{\boldsymbol {v}}^{H})$ for all $z \in \{{\rm bn}, \rm bf\}$. In addition, define $\boldsymbol {V}=\bar{\boldsymbol {v}}\bar{\boldsymbol {v}}^{H}$, which needs to satisfy ${\mathrm{rank}}(\boldsymbol {V}) = 1$ and $\boldsymbol {V}\succeq \boldsymbol {0}$. This rank-one constraint (${\mathrm{rank}}(\boldsymbol {V})=1$) is non-convex \cite{Zhang_TWC_2019}. Consequently, by dropping this constraint, $\mathrm{PS-HD}$ can be rewritten as 
\begingroup
\begin{subequations}
\label{prob:SDR-HD}
\begin{align}
&\mathcal{P}: \mathrm{Find}\quad \boldsymbol{\theta}_{(1)}^{\mathrm{HD}} \\
&\text{s.t.}\,\,\alpha_{\rm n}^{\mathrm{HD}}P_{\mathrm{BS}} ({\mathrm{tr}}(\boldsymbol {Q}_{\rm bn}\boldsymbol{V}) + |h_{\rm bn}|^2) \geq t_{\rm n}^{\rm HD}\sigma_{\rm n}^2, \label{HD:P2_C1}\\
&\quad \,\,\,  \alpha_{\rm f}^{\mathrm{HD}}P_{\mathrm{BS}}({\mathrm{tr}}(\boldsymbol {Q}_{\rm bf}\boldsymbol{V}) + |h_{\rm bf}|^2) \geq (t_{\rm f}^{\rm HD} - {\tt{SINR}}_{(2)}^{\mathrm{HD}})(\alpha_{\rm n}^{\mathrm{HD}}P_{\mathrm{BS}}({\mathrm{tr}}(\boldsymbol{Q}_{\rm bf}\boldsymbol{V}) + |h_{\rm bf}|^2) + \sigma_{\rm f}^2), \label{HD:P2_C2}\\
&\quad \,\,\,\alpha_{\rm f}^{\mathrm{HD}}P_{\mathrm{BS}} ({\mathrm{tr}}(\boldsymbol {Q}_{\rm bn}\boldsymbol{V}) + |h_{\rm bn}|^2) \geq t_{\rm f}^{\rm HD}(\alpha_{\rm n}^{\mathrm{HD}}P_{\mathrm{BS}} ({\mathrm{tr}}(\boldsymbol {Q}_{\rm bn}\boldsymbol{V}) + |h_{\rm bn}|^2) + \sigma_{\rm n}^2), \label{HD:P2_C4} \\&\quad\,\,\, \boldsymbol {V} \succeq 0, \\ &\quad\,\,\, [\boldsymbol{V}]_{m,m} = 1, \qquad\qquad \forall\,\, m \in \llbracket 1, M+1 \rrbracket.
\end{align}
\end{subequations}
\endgroup
It is not difficult to observe that problem $\mathcal{P}$ is a semi-definite programming (SDP) problem and hence it can be optimally solved by existing convex optimization solvers such as CVX \cite{Zhang_TWC_2019}. In general, the optimal $\boldsymbol{V}$ obtained by solving problem $\mathcal{P}$ does not satisfy the rank-one constraint. This implies that the optimal solution of problem $\mathcal{P}$ only serves as an upper bound for $\mathrm{PS-HD}$. Consequently, additional steps are required to construct a rank-one solution, which can be achieved by applying the Gaussian randomization scheme to obtain a rank-one solution. This can be described as follows. First, we obtain the eigenvalue decomposition of $\boldsymbol{V}$ as $\boldsymbol{V} = \boldsymbol{U} \boldsymbol{\Sigma} \boldsymbol{U}^H$, where $\boldsymbol{U} = [u_1, u_2, \dots, u_{M + 1}]$ is a unitary matrix and $\boldsymbol{\Sigma} = \mathrm{diag}(\lambda_1, \lambda_2, \dots, \lambda_{M + 1})$ is a diagonal matrix, respectively. After that, a random vector is generated as $\bar{v} = \boldsymbol{U} \boldsymbol{\Sigma}^{1/2}\mathbf{r}$, where $\mathbf{r}$ is a random vector that follows
a circularly symmetric complex Gaussian (CSCG) distribution with a zero mean and a co-variance matrix equal to the identity matrix of order $M+1$, denoted by $\boldsymbol{I}_{M + 1}$, i.e., $\mathbf{r} \sim \mathcal{CN}(\boldsymbol{0}, \boldsymbol{I}_{M + 1})$. Furthermore, we generate the scalar $\boldsymbol{v} = \exp\left[{j~\mathrm{arg}\left(\frac{\left[\bar{\boldsymbol{v}}\right]_{1:M}}{\left[\bar{\boldsymbol{v}}\right]_{M + 1}}\right)}\right]$, where $[\boldsymbol{x}]_{1:M}$ denotes a vector having the first $M$ elements in $\boldsymbol{x}$. It is important to mention that the SDR approach followed by a large number of Gauss randomization can guarantee a minimum accuracy of $\pi/4$ of the optimal objective value \cite{Zhang_TWC_2019}. Finally, the steps of the proposed optimization scheme for RIS-enabled HD C-NOMA are presented in details in \textbf{Algorithm 1}. The convergence of the alternating optimization technique is guaranteed and the proof details can be found in \cite{bezdek2003convergence}.
\begin{algorithm}[!t]
\DontPrintSemicolon
\small{
\caption{\small{Alternating Optimization Algorithm for RIS-enabled HD C-NOMA}}
\KwIn{$P_{\mathrm{BS}}, P_{\rm n}, \boldsymbol{Q}_{\rm bn}, \boldsymbol{Q}_{\rm bf}, \sigma^2_{\rm n}, \sigma_{\rm f}^2$ and ${\tt{SINR}}_{(2)}^{\mathrm{HD}}$\;}
Initialize the phase-shift $\boldsymbol{\theta}_{(1)}^{\mathrm{HD}, 0}$ and set the iteration number $l = 1$\;
\textbf{repeat}\;
Using the closed-form expressions in \textit{Theorem} 2 to find the optimal power allocation coefficients at the BS $(\alpha_{\rm n}^{\mathrm{HD}, l}, \alpha_{\rm f}^{\mathrm{HD}, l})$ and the optimal power fraction coefficient at UE$_{\rm n}$, i.e. $\beta^{\mathrm{HD}, l}$ for given $\boldsymbol{\theta}_{(1)}^{\mathrm{HD}, l}$  \; 
Solve problem $\mathcal{P}$ for given power allocation coefficients $(\alpha_{\rm n}^{\mathrm{HD}, l}, \alpha_{\rm f}^{\mathrm{HD}, l}, \beta^{\mathrm{HD}, l})$\;
Obtain an approximate solution for the phase-shift using eigenvalue decomposition and Gauss randomization approach. Then, denote the solution as $\boldsymbol{\theta}_{(1)}^{\mathrm{HD}, l}$\;
Update $l = l + 1$\;
\textbf{Until} The decrease of the objective value in \eqref{prob:ROS HD C_NOMA} is below a threshold $\epsilon > 0$  or the maximum number of iterations $L$ is reached
}
\end{algorithm}
 %While the SDR may not be tight for problem ( ), the Gaussian randomization can be similarly used to obtain a feasible solution to problem ( ) based on the higher-rank solution obtained by solving ( ).
\section{RIS-Enabled FD C-NOMA: Problem Formulation and Solution Approach}
\label{Sec:FD C-NOMA RIS}
\subsection{Problem Formulation}
In this subsection, we investigate the  power minimization problem for RIS-enabled FD C-NOMA systems. In the FD C-NOMA, UE$_{\rm n}$ decodes the message of UE$_{\rm f}$ and then forwards it to UE$_{\rm f}$ through the D2D direct link with the aid of the RIS in the same time-slot. As a result, in contrast to the HD scenario that requires two time-slot and adjusts the RIS's configuration in each one of them, only one RIS's configuration is required in the FD case. By optimizing the power allocation coefficients at the BS $(\alpha^{\mathrm{FD}}_{\rm n}, \alpha^{\mathrm{FD}}_{\rm f})$, the power fraction coefficient at UE$_{\rm n}$, $\beta^{\mathrm{FD}}$, and the phase-shift coefficients for the RIS $\boldsymbol{\theta}^{\mathrm{FD}} = [\theta_1, \theta_2, \dots, \theta_M]$, the total transmit power minimization problem for the proposed RIS-enabled FD C-NOMA framework can be formulated as follows.
\allowdisplaybreaks
\begingroup
\begin{subequations}
\label{prob:RIS FD C_NOMA}
\begin{align}
&\mathrm{OPT}-\mathrm{FD}: \min_{ \substack{\boldsymbol{\theta}^{\mathrm{FD}}, \alpha_{\rm n}^{\mathrm{FD}},\\ \alpha_{\rm f}^{\mathrm{FD}}, \beta^{\mathrm{FD}}}}   (\alpha^{\mathrm{FD}}_{\rm n} + \alpha^{\mathrm{FD}}_{\rm f}) P_{\mathrm{BS}} + \beta^{\mathrm{FD}} P_{\rm n}, \\
&\text{s.t.}\,\, 0 \leq \alpha_{\rm n}^{\mathrm{FD}} \leq \alpha_{\rm f}^{\mathrm{FD}}, \label{P3_C1}\\
&\quad \,\,\, 0 \leq \alpha_{\rm n}^{\mathrm{FD}} + \alpha_{\rm f}^{\mathrm{FD}}  \leq 1, \label{P3_C2}\\
&\quad \,\,\, 0 \leq \beta^{\mathrm{FD}} \leq 1, \label{P3_C3}\\
&\quad \,\,\, \mathcal{R}_{\rm n \longrightarrow n}^{\mathrm{FD}} \geq  R^{\rm th}_{\rm n}, \label{P3_C4}\\
&\quad \,\,\, \mathcal{R}_{\mathrm{MRC}}^{\mathrm{FD}} \geq R_{\rm f}^{\rm th}, \label{P3_C5}\\ 
&\quad \,\,\, \mathcal{R}_{\rm n \longrightarrow f}^{\mathrm{FD}} \geq R_{\rm f}^{\rm th}, \label{P3_C6}\\
&\quad \,\,\, 0\leq \theta_{m} \leq 2\pi, \quad \forall\,\, m \in \llbracket 1, M \rrbracket \label{P3_C7}
\end{align}
\end{subequations}
\endgroup
Similar to $\mathrm{OPT-HD}$, $\mathrm{OPT-FD}$ is hard to be solved by common standard optimization techniques. Therefore, we resort to the alternating optimization technique similar to the RIS-enabled HD C-NOMA case in solving problem $\mathrm{OPT-FD}$. Consequently, the power control optimization problem with a fixed $\boldsymbol{\theta}^{\mathrm{FD}}$ can be written as
\allowdisplaybreaks
\begingroup
\begin{subequations}
\label{prob: FD Power Control}
\begin{align}
&\mathrm{PC-FD}: \min_{ \alpha_{\rm n}^{\mathrm{FD}},\alpha_{\rm f}^{\mathrm{FD}}, \beta^{\mathrm{FD}}}   (\alpha_{\rm n}^{\mathrm{FD}} + \alpha_{\rm f}^{\mathrm{FD}}) P_{\mathrm{BS}} + \beta^{\mathrm{FD}} P_{\rm n}, \\
&\text{s.t.}\,\, \eqref{P3_C1}-\eqref{P3_C6},
\end{align}
\end{subequations}
\endgroup
whereas the passive beamforming optimization problem with given power allocation coefficients, i.e., $\alpha_{\rm n}^{\mathrm{FD}}, \alpha_{\rm f}^{\mathrm{FD}}$ and $\beta^{\mathrm{FD}}$ can be presented as
\allowdisplaybreaks
\begingroup
\begin{subequations}
\label{prob: FD Phase-shift Opt}
\begin{align}
&\mathrm{PS-FD}: \mathrm{Find}  ~~ \boldsymbol{\theta}^{\mathrm{FD}},  \\
&\text{s.t.}\,\, \eqref{P3_C4}-\eqref{P3_C7}.
\end{align}
\end{subequations}
\endgroup 
\subsection{RIS-enabled FD C-NOMA: Power Allocation Optimization}
In this part, we start by determining the feasibility conditions of problem $\mathrm{PC-FD}$. Assuming that the phase shift matrix $\boldsymbol{\Theta}^{\mathrm{FD}}$ is fixed, we denote by $\gamma_{\rm SI} \triangleq \frac{P_{\rm n} \gamma_{\text{SI}}}{\sigma_{\rm n}^2}$ and by
\begin{equation}
    \gamma_{\rm bn} \triangleq \frac{P_{\mathrm{BS}}|h_{\rm bn} + \boldsymbol{h}^H_{\rm rn}\boldsymbol{\Theta}^{\mathrm{FD}}\boldsymbol{h}_{\rm br}|^2}{\sigma_{\rm n}^2},\,\,
    \gamma_{\rm bf} \triangleq \frac{P_{\mathrm{BS}}|h_{\rm bf} + \boldsymbol{h}^H_{\rm rf}\boldsymbol{\Theta}^{\mathrm{FD}}\boldsymbol{h}_{\rm br}|^2}{\sigma_{\rm f}^2},\,\, \text{and}\,\, \gamma_{\rm d} \triangleq \frac{P_{\rm n} |h_{\rm nf} + \boldsymbol{h}^H_{\rm rf}\boldsymbol{\Theta}^{\mathrm{FD}}\boldsymbol{h}_{\rm nr}|^2}{\sigma_{\rm f}^2}, 
\end{equation}
Based on this, the feasibility conditions of problem $\mathrm{PC-FD}$ are presented in the following theorem.
\begin{theorem}
Problem $\mathrm{PC-FD}$ is feasible if and only if the following conditions hold.
\begin{subequations}
\label{eq:cdts_FD}
\begin{align}
    &\text{Condition 1:} \,\, \beta_{\min}^{\rm FD} \leq \beta_{\max}^{\rm FD}, \label{eq:cdt3}\\
    &\text{Condition 2:} \,\, \frac{\gamma_{\rm SI} t_{\rm n}^{\rm FD}}{ \gamma_{\rm bn}} \beta_{\min}^{\rm FD} + \frac{t_{\rm n}^{\rm FD}}{\gamma_{\rm bn}} \leq \frac{1}{2}, \label{eq:cdt4}
\end{align}
\end{subequations}
where $\beta_{\min}^{\rm FD}$ and $\beta_{\max}^{\rm FD}$ are expressed, respectively, as
\begin{equation}
    \label{FD_bounds11}
\beta_{\min}^{\rm HD} = \max \left(0, \frac{\gamma_{\rm bf} - t_{\rm f}^{\rm FD} - \frac{\gamma_{\rm bf}\left(1 + t_{\rm f}^{\rm FD} \right)t_{\rm n}^{\rm FD}}{\gamma_{\rm n}}}{c_1-c_2}  \right) \quad \text{and} \quad \beta_{\max}^{\rm HD} = \min \left(1, \frac{\gamma_{\rm n} - t_{\rm f}^{\rm FD} - t_{\rm n}^{\rm FD}\left(1 + t_{\rm f}^{\rm FD} \right)}{\gamma_{\rm SI}t_{\rm n}^{\rm FD}\left(1 + t_{\rm f}^{\rm FD} \right) + t_{\rm f}^{\rm FD}}\right),
\end{equation}
if $c_1 < c_2$ and 
\begin{equation}
    \label{FD_bounds12}
    \beta_{\min}^{\rm HD} = 0,\quad \text{and} \quad 
    \beta_{\max}^{\rm HD} = \min \left(1, \frac{\gamma_{\rm n} - t_{\rm f}^{\rm FD} - t_{\rm n}^{\rm FD}\left(1 + t_{\rm f}^{\rm FD} \right)}{\gamma_{\rm SI}t_{\rm n}^{\rm FD}\left(1 + t_{\rm f}^{\rm FD} \right) + t_{\rm f}^{\rm FD}},\frac{\gamma_{\rm bf} - t_{\rm f}^{\rm FD} - \frac{\gamma_{\rm bf}\left(1 + t_{\rm f}^{\rm FD} \right)t_{\rm n}^{\rm FD}}{\gamma_{\rm n}}}{c_1-c_2}\right),
\end{equation}
if $c_1 > c_2$,  such that $c_1 = \frac{\gamma_{\rm bf}}{\gamma_{\rm n}}\left(1 + t_{\rm f}^{\rm FD} \right)t_{\rm n}^{\rm FD} \gamma_{\rm SI}$, $c_2 =  \gamma_{\rm d}$, $t_{\rm n}^{\rm FD} = 2^{R_{\rm n}^{\rm th}} - 1$ and $t_{\rm f}^{\rm FD} = 2^{R_{\rm n}^{\rm th}} - 1$.
\end{theorem}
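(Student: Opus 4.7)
The plan is to convert the three rate requirements \eqref{P3_C4}--\eqref{P3_C6} into equivalent affine inequalities in $(\alpha_{\rm n}^{\mathrm{FD}}, \alpha_{\rm f}^{\mathrm{FD}}, \beta^{\mathrm{FD}})$ using the definitions of $\gamma_{\rm bn}, \gamma_{\rm bf}, \gamma_{\rm d}, \gamma_{\rm SI}$, and then to eliminate the two power variables by driving them to their smallest admissible values. This reduces feasibility of the full three-dimensional system to the existence of a single scalar $\beta^{\mathrm{FD}} \in [0,1]$ satisfying a short list of affine inequalities, from which Conditions 1 and 2 can be read off directly.

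First I would rewrite \eqref{P3_C4} as $\alpha_{\rm n}^{\mathrm{FD}}\gamma_{\rm bn}\geq t_{\rm n}^{\rm FD}(\beta^{\mathrm{FD}}\gamma_{\rm SI}+1)$, \eqref{P3_C6} as $\alpha_{\rm f}^{\mathrm{FD}}\gamma_{\rm bn}\geq t_{\rm f}^{\rm FD}(\alpha_{\rm n}^{\mathrm{FD}}\gamma_{\rm bn} + \beta^{\mathrm{FD}}\gamma_{\rm SI}+1)$, and \eqref{P3_C5} as $\alpha_{\rm f}^{\mathrm{FD}}\gamma_{\rm bf} + \beta^{\mathrm{FD}}\gamma_{\rm d}\geq t_{\rm f}^{\rm FD}(\alpha_{\rm n}^{\mathrm{FD}}\gamma_{\rm bf}+1)$. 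All surviving constraints in $\alpha_{\rm n}^{\mathrm{FD}}$ and $\alpha_{\rm f}^{\mathrm{FD}}$ are monotone, and the SIC and MRC lower bounds on $\alpha_{\rm f}^{\mathrm{FD}}$ are both increasing in $\alpha_{\rm n}^{\mathrm{FD}}$; hence a feasible triple exists for a given $\beta^{\mathrm{FD}}$ if and only if the triple obtained by taking $\alpha_{\rm n}^{\mathrm{FD}}(\beta^{\mathrm{FD}})=t_{\rm n}^{\rm FD}(\beta^{\mathrm{FD}}\gamma_{\rm SI}+1)/\gamma_{\rm bn}$ and $\alpha_{\rm f}^{\mathrm{FD}}$ equal to the maximum of its SIC and MRC lower bounds is itself feasible.

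Second, I would substitute these minima into the remaining constraints \eqref{P3_C1}--\eqref{P3_C3}. The budget inequality $\alpha_{\rm n}^{\mathrm{FD}}+\alpha_{\rm f}^{\mathrm{FD}}\leq 1$, combined with the SIC expression for $\alpha_{\rm f}^{\mathrm{FD}}$, becomes affine in $\beta^{\mathrm{FD}}$ and yields the non-trivial term of $\beta_{\max}^{\rm FD}$ appearing in \eqref{FD_bounds11} and \eqref{FD_bounds12}. Substituting into the MRC inequality and simplifying reduces it to an affine inequality of the form $(c_2 - c_1)\beta^{\mathrm{FD}} \geq \gamma_{\rm bf} - t_{\rm f}^{\rm FD} - \gamma_{\rm bf}(1+t_{\rm f}^{\rm FD})t_{\rm n}^{\rm FD}/\gamma_{\rm n}$. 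Its orientation depends on the sign of $c_2 - c_1$: when $c_1 < c_2$ it is a lower bound, producing the non-trivial $\beta_{\min}^{\rm FD}$ of \eqref{FD_bounds11}; when $c_1 > c_2$ it is an upper bound that is merged into $\beta_{\max}^{\rm FD}$ and leaves $\beta_{\min}^{\rm FD} = 0$, giving \eqref{FD_bounds12}. After intersecting with $\beta^{\mathrm{FD}}\in[0,1]$, non-emptiness of the resulting interval is exactly Condition 1 \eqref{eq:cdt3}.

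Third, the NOMA ordering \eqref{P3_C1} $\alpha_{\rm n}^{\mathrm{FD}}\leq\alpha_{\rm f}^{\mathrm{FD}}$, combined with the budget, is equivalent to $\alpha_{\rm n}^{\mathrm{FD}}\leq 1/2$. Since $\alpha_{\rm n}^{\mathrm{FD}}(\beta^{\mathrm{FD}})=(\gamma_{\rm SI}t_{\rm n}^{\rm FD}/\gamma_{\rm bn})\beta^{\mathrm{FD}} + t_{\rm n}^{\rm FD}/\gamma_{\rm bn}$ is strictly increasing in $\beta^{\mathrm{FD}}$, the smallest value of $\alpha_{\rm n}^{\mathrm{FD}}$ over the feasible $\beta$-interval is attained at $\beta^{\mathrm{FD}}=\beta_{\min}^{\rm FD}$, so the test reduces to $\alpha_{\rm n}^{\mathrm{FD}}(\beta_{\min}^{\rm FD})\leq 1/2$, which is exactly Condition 2 \eqref{eq:cdt4}. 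Since every step of the reduction is an equivalence, the converse direction is immediate. The main obstacle is handling the MRC constraint cleanly: the sign of $c_2-c_1$ is data-dependent and flips its role from a lower to an upper bound on $\beta^{\mathrm{FD}}$, which is what forces the two-branch definition \eqref{FD_bounds11}/\eqref{FD_bounds12} and requires careful bookkeeping of which half-line is active in each branch.
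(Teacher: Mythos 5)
Your proposal is correct and follows essentially the same route as the paper's Appendix C: reduce the rate constraints to affine inequalities, pin $\alpha_{\rm n}^{\rm FD}$ and $\alpha_{\rm f}^{\rm FD}$ to their minimal admissible values as functions of $\beta^{\rm FD}$, obtain the $\beta$-interval whose non-emptiness is Condition 1 (with the $c_1$ versus $c_2$ sign split governing which branch of the bounds applies), and recover Condition 2 from $\alpha_{\rm n}^{\rm FD}(\beta_{\min}^{\rm FD})\leq \tfrac{1}{2}$. The paper phrases the elimination as comparing a lower bound $B_1$ on $\alpha_{\rm n}^{\rm FD}$ against two upper bounds $B_2, B_3$, but this is the same argument as your substitution.
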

\begin{proof}
See Appendix C.
\end{proof}
Before continuing with the derivation of the optimal power control, let us define the following quantities. Let $\beta_c^{\rm FD}$, $\alpha_c^{\rm FD}$, $\alpha_{\min}^{\rm FD}$ and $\alpha_{\max}^{\rm FD}$ be defined, respectively, as 
\begin{equation}
\label{eq:vars}
\begin{aligned}
    &\beta_{\rm c}^{\rm FD} \triangleq \frac{t_{\rm f}^{\rm FD} \left(1 - \frac{\gamma_{\rm bn}}{\gamma_{\rm bf}} \right)}{\frac{\gamma_{\rm bn}}{\gamma_{\rm bf}}\gamma_{\rm SI} t_{\rm f}^{\rm FD} + \gamma_{d}}, \quad &\alpha_{\rm c}^{\rm FD} \triangleq \frac{ \gamma_{\rm SI} t_{\rm n}^{\rm FD}}{\gamma_{\rm bn}} \beta_{\rm c}^{\rm FD} + \frac{t_{\rm n}^{\rm FD}}{\gamma_{\rm bn}},\\
    &\alpha_{\min}^{\rm FD} \triangleq \frac{ \gamma_{\rm SI} t_{\rm n}^{\rm FD}}{ \gamma_{\rm bn}} \beta_{\min}^{\rm FD} + \frac{t_{\rm n}^{\rm FD}}{\gamma_{\rm bn}}, \quad &\alpha_{\max}^{\rm FD} \triangleq \frac{ \gamma_{\rm SI} t_{\rm n}^{\rm FD}}{ \gamma_{\rm bn}} \beta_{\max}^{\rm FD} + \frac{t_{\rm n}^{\rm FD}}{\gamma_{\rm bn}}, 
\end{aligned}
\end{equation}
and let $\alpha_{0}^{\rm FD}$ and $\beta_{0}^{\rm FD}$ be the quantities defined, respectively, as 
\begin{equation}
\label{eq:transformation}
\left(\alpha_{0}^{\rm FD}, \beta_{0}^{\rm FD}\right) = \left\{
\begin{aligned}
&\left(\alpha_{\min}^{\rm FD},\beta_{\min}^{\rm FD} \right), \quad &\text{if}\,\, \beta_c^{\rm FD} < \beta_{\min}^{\rm FD}, \\ 
&\left(\alpha_{\rm c}^{\rm FD},\beta_{\rm c}^{\rm FD} \right), \quad &\text{if}\,\, \beta_c^{\rm FD}  \in \left[\beta_{\min}^{\rm FD},\beta_{\max}^{\rm FD} \right],\\
&\left(\alpha_{\max}^{\rm FD},\beta_{\max}^{\rm FD} \right), \quad &\text{otherwise}.
\end{aligned}
\right.
\end{equation}
Based on the above definitions, let $\boldsymbol{\alpha}_{\rm n}^{\rm FD}$ and $\boldsymbol{\beta}^{\rm FD}$ be the vectors defined as 
\begin{equation}
\label{eq:intersection_points}
\begin{cases}
\begin{aligned}
\boldsymbol{\alpha}_{\rm n}^{\rm FD} \triangleq \left(\alpha_{\min}^{\rm FD},\alpha_{0}^{\rm FD},\alpha_{\max}^{\rm FD} \right) \quad \text{and} \quad \boldsymbol{\beta}^{\rm FD} \triangleq \left(\beta_{\min}^{\rm FD},\beta_{0}^{\rm FD},\beta_{\max}^{\rm FD} \right)
\end{aligned}, & \text{if} \quad \alpha_{0}^{\rm FD} \leq \frac{1}{2} \quad \text{and} \quad \alpha_{\max}^{\rm FD} \leq \frac{1}{2},\\
\begin{aligned}
\boldsymbol{\alpha}_{\rm n}^{\rm FD} \triangleq \left(\alpha_{\min}^{\rm FD},\alpha_{0}^{\rm FD} \right) \quad \text{and} \quad
\boldsymbol{\beta}^{\rm FD} \triangleq \left(\beta_{\min}^{\rm FD},\beta_{0}^{\rm FD} \right)
\end{aligned}, & \text{if} \quad \alpha_{0}^{\rm FD} \leq \frac{1}{2} \quad \text{and} \quad \alpha_{\max}^{\rm FD} \geq \frac{1}{2}, \\ 
\begin{aligned}
\boldsymbol{\alpha}_{\rm n}^{\rm FD} \triangleq \left(\alpha_{\min}^{\rm FD},\alpha_{\max}^{\rm FD} \right) \quad \text{and} \quad
\boldsymbol{\beta}^{\rm FD} \triangleq \left(\beta_{\min}^{\rm FD},\beta_{\max}^{\rm FD} \right)
\end{aligned}, & \text{if} \quad \alpha_{0}^{\rm FD} \geq \frac{1}{2} \quad \text{and} \quad \alpha_{\max}^{\rm FD} \leq \frac{1}{2},\\
\begin{aligned}
\boldsymbol{\alpha}_{\rm n}^{\rm FD} \triangleq \alpha_{\min}^{\rm FD}  \quad \text{and} \quad 
\boldsymbol{\beta}^{\rm FD} \triangleq \beta_{\min}^{\rm FD} 
\end{aligned}, & \text{if} \quad \alpha_{0}^{\rm FD} \geq \frac{1}{2} \quad \text{and} \quad \alpha_{\max}^{\rm FD} \geq \frac{1}{2}.
\end{cases}
\end{equation}
Let $L \in \{1,2,3\}$ be the number of elements of  $\boldsymbol{\alpha}_{\rm n}^{\rm FD}$ (or $\boldsymbol{\beta}^{\rm FD}$ equivalently). Then, let $\mathbf{x}_1^{\rm FD}$ and $\mathbf{x}_2^{\rm FD}$ be the $2 \times L$ vectors defined, respectively, as $\mathbf{x}_1^{\rm FD} \triangleq  t_{\rm f}^{\rm FD}\boldsymbol{\alpha}_{\rm n}^{\rm FD} + \frac{ \gamma_{\rm SI} t_{\rm f}^{\rm FD}}{\gamma_{\rm bn}} \boldsymbol{\beta}^{\rm FD}   + \frac{t_{\rm f}^{\rm FD}}{ \gamma_{\rm bn}}$ and $\mathbf{x}_2^{\rm FD} \triangleq t_{\rm f}^{\rm FD}\boldsymbol{\alpha}_{\rm n}^{\rm FD} - \frac{ \gamma_{\rm d}}{ \gamma_{\rm bf}} \boldsymbol{\beta}^{\rm FD}   + \frac{t_{\rm f}^{\rm FD}}{ \gamma_{\rm bf}}$. Finally, let $\boldsymbol{\alpha}_{\rm f}^{\rm FD}$ be the $1 \times L$ vector defined as
\begin{equation}
    \label{eq:power_far}
   [\boldsymbol{\alpha}_{\rm f}^{\rm FD}]_{i}   = \max \left([\mathbf{x}_1^{\rm FD}]_{i},[\mathbf{x}_2^{\rm FD}]_{i} \right), \quad \forall\,\,i \in \llbracket 1, L \rrbracket
\end{equation}
To this end, based on the above and assuming that problem $\mathrm{PC-FD}$ is feasible, its optimal solution is given in the following theorem. 
\begin{theorem}
Assuming that problem $\mathrm{PC-FD}$ is feasible, i.e., conditions \eqref{eq:cdt3} and \eqref{eq:cdt4} hold, its optimal solution is given by
\begin{equation}
    \left(\alpha_{\rm n}^{{\rm FD}^*}, \alpha_{\rm f}^{{\rm FD}^*},\beta^{{\rm FD}^*} \right) = \argmin_{\mathbf{p}^{\rm FD} \in \mathcal{P}^{\rm FD}}  f \left(\mathbf{p}^{\rm FD} \right),
\end{equation}
where $\mathcal{P}^{\rm FD} = \left\{\mathbf{p}_1^{\rm FD},\dots,\mathbf{p}_L^{\rm FD} \right\}$, where for $i \in \left[1,L \right]$, $\mathbf{p}_i^{\rm FD}$ is the $1 \times 3$ vector expressed as $\mathbf{p}_i^{\rm FD} = \left([\boldsymbol{\alpha}_{\rm n}^{\rm FD}]_{i},[\boldsymbol{\alpha}_{\rm f}^{\rm FD}]_{i},[\boldsymbol{\beta}^{\rm FD}]_{i} \right)$.
\end{theorem}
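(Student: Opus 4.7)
The plan is to exploit the linearity of the objective in $(\alpha_{\rm n}^{\rm FD}, \alpha_{\rm f}^{\rm FD}, \beta^{\rm FD})$ together with the monotonicity of each QoS constraint in these three variables, so as to successively eliminate two of the three unknowns and reduce problem $\mathrm{PC-FD}$ to a convex piecewise-linear univariate minimization.

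First, I would argue that at any optimum the QoS constraint \eqref{P3_C4} for UE$_{\rm n}$ must be active. From \eqref{Eq: FD R_n_n}, \eqref{Eq: FD R_n_f} and \eqref{Eq: FD R_MRC}, the variable $\alpha_{\rm n}^{\rm FD}$ enters the numerator of $\mathcal{R}_{\rm n \longrightarrow n}^{\rm FD}$ but only the denominators of $\mathcal{R}_{\rm n \longrightarrow f}^{\rm FD}$ and $\mathcal{R}_{\rm MRC}^{\rm FD}$, so strictly reducing $\alpha_{\rm n}^{\rm FD}$ both lowers the objective and relaxes constraints \eqref{P3_C5} and \eqref{P3_C6}; hence \eqref{P3_C4} must bind. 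Activating it yields the affine relation
\begin{equation*}
\alpha_{\rm n}^{\rm FD} = \frac{\gamma_{\rm SI} t_{\rm n}^{\rm FD}}{\gamma_{\rm bn}}\beta^{\rm FD} + \frac{t_{\rm n}^{\rm FD}}{\gamma_{\rm bn}},
\end{equation*}
which is precisely the pattern embedded in the definitions of $\alpha_{\min}^{\rm FD}$, $\alpha_{\rm c}^{\rm FD}$ and $\alpha_{\max}^{\rm FD}$ in \eqref{eq:vars}. Substituting this identity into the two remaining QoS constraints \eqref{P3_C5} and \eqref{P3_C6} on UE$_{\rm f}$ produces two affine lower bounds on $\alpha_{\rm f}^{\rm FD}$ as functions of $\beta^{\rm FD}$; direct algebra identifies these as the scalar affine forms whose evaluations at the candidate pairs $(\boldsymbol{\alpha}_{\rm n}^{\rm FD},\boldsymbol{\beta}^{\rm FD})$ give the vectors $\mathbf{x}_1^{\rm FD}$ and $\mathbf{x}_2^{\rm FD}$. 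Since the objective is strictly increasing in $\alpha_{\rm f}^{\rm FD}$, at the optimum one sets $\alpha_{\rm f}^{\rm FD}$ equal to the pointwise maximum of these two affine bounds, matching \eqref{eq:power_far}.

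After these eliminations the problem reduces to minimizing a scalar function $F(\beta^{\rm FD})$ over the interval $[\beta_{\min}^{\rm FD}, \beta_{\max}^{\rm FD}]$ returned by Theorem 3. Because $\alpha_{\rm f}^{\rm FD}$ has been set to the maximum of two affine functions of $\beta^{\rm FD}$, $F$ is piecewise linear and convex, and its minimum on the bounded interval is attained either at one of the endpoints or at the unique kink where the two affine pieces intersect. Solving this intersection equation gives exactly the expression for $\beta_{\rm c}^{\rm FD}$ in \eqref{eq:vars}, and the definition of $\beta_0^{\rm FD}$ in \eqref{eq:transformation} simply clips $\beta_{\rm c}^{\rm FD}$ back into the feasibility interval when it falls outside. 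Thus the candidate minimizers reduce to $\{\beta_{\min}^{\rm FD}, \beta_0^{\rm FD}, \beta_{\max}^{\rm FD}\}$, with the associated $\alpha_{\rm n}^{\rm FD}$ values recovered through the affine relation above.

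The final step is to impose constraints \eqref{P3_C1} and \eqref{P3_C2}, which together force $\alpha_{\rm n}^{\rm FD} \leq 1/2$ by combining $\alpha_{\rm n}^{\rm FD} \leq \alpha_{\rm f}^{\rm FD}$ with $\alpha_{\rm n}^{\rm FD} + \alpha_{\rm f}^{\rm FD} \leq 1$. Condition \eqref{eq:cdt4} of Theorem 3 already guarantees $\alpha_{\min}^{\rm FD} \leq 1/2$, so only the larger candidates $\alpha_0^{\rm FD}$ and $\alpha_{\max}^{\rm FD}$ may need pruning. Since $\alpha_{\rm n}^{\rm FD}$ is a monotone-increasing affine function of $\beta^{\rm FD}$, the pruning is governed entirely by whether each of $\alpha_0^{\rm FD}$ and $\alpha_{\max}^{\rm FD}$ lies above or below $1/2$, giving exactly the four-case split in \eqref{eq:intersection_points} that defines $\boldsymbol{\alpha}_{\rm n}^{\rm FD}$ and $\boldsymbol{\beta}^{\rm FD}$. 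The optimum of $\mathrm{PC-FD}$ is then the element of $\mathcal{P}^{\rm FD}$ minimizing $f$, which is the conclusion of the theorem. The main obstacle I anticipate is the bookkeeping in this four-way split: one must verify that in each branch the pruned endpoints are genuinely infeasible rather than merely dominated, and that the candidate value $\alpha_{\rm f}^{\rm FD} = \max(\mathbf{x}_1^{\rm FD},\mathbf{x}_2^{\rm FD})$ simultaneously satisfies the SIC bound $\alpha_{\rm n}^{\rm FD} \leq \alpha_{\rm f}^{\rm FD}$ and the budget $\alpha_{\rm n}^{\rm FD} + \alpha_{\rm f}^{\rm FD} \leq 1$; both should follow from $t_{\rm f}^{\rm FD} \geq 1$ combined with the feasibility conditions \eqref{eq:cdt3}--\eqref{eq:cdt4}, but each branch needs an explicit check.
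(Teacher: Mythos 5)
Your proposal is correct and lands on exactly the paper's candidate set, but it organizes the argument differently and, frankly, more completely. The paper's Appendix D proof is a one-line vertex argument: the objective is linear, so the optimum lies at an intersection point of the boundary constraints, and it then simply lists $\left(\alpha_{\min}^{\rm FD},\beta_{\min}^{\rm FD}\right)$, the clipped intersection $\left(\alpha_{0}^{\rm FD},\beta_{0}^{\rm FD}\right)$ of the bounds $B_2$ and $B_3$, and $\left(\alpha_{\max}^{\rm FD},\beta_{\max}^{\rm FD}\right)$, with $\alpha_{\rm f}^{\rm FD}$ recovered from \eqref{eq:power_far}. Your sequential elimination — forcing \eqref{P3_C4} active to pin $\alpha_{\rm n}^{\rm FD}$ as an affine function of $\beta^{\rm FD}$, setting $\alpha_{\rm f}^{\rm FD}$ to the pointwise maximum of the two affine lower bounds, and then minimizing a univariate convex piecewise-linear function whose only interior candidate is the kink $\beta_{\rm c}^{\rm FD}$ (which is precisely the $B_2$--$B_3$ intersection, since both bounds come from pairing $\alpha_{\rm f}^{\rm FD}\le 1-\alpha_{\rm n}^{\rm FD}$ with \eqref{P3_C15} and \eqref{P3_C16} respectively) — supplies the justification the paper omits for why only these three vertices, rather than all vertices of the three-dimensional polytope, need be enumerated. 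What your route buys is rigor; what the paper's buys is brevity. One loose end you rightly flag, and which the paper also leaves unaddressed, is that $\alpha_{\rm f}^{\rm FD}=\max\left([\mathbf{x}_1^{\rm FD}]_i,[\mathbf{x}_2^{\rm FD}]_i\right)$ only dominates the SIC bound $\alpha_{\rm n}^{\rm FD}\le\alpha_{\rm f}^{\rm FD}$ automatically when $t_{\rm f}^{\rm FD}\ge 1$ (i.e., $R_{\rm f}^{\rm th}\ge 1$); for smaller thresholds the candidate should formally be $\max\left([\mathbf{x}_1^{\rm FD}]_i,[\mathbf{x}_2^{\rm FD}]_i,[\boldsymbol{\alpha}_{\rm n}^{\rm FD}]_i\right)$, a refinement neither proof states explicitly.
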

\begin{proof}
See Appendix D.
\end{proof}
\subsection{RIS-enabled FD C-NOMA: Phase-Shift Coefficients Optimization}
Let $\boldsymbol {v} = [v_{1}, \dots, v_{M}]^{H}$, where for all $m \in \llbracket 1, M \rrbracket$, $v_{m} = e^{j\theta _{m}}$. By applying the change of variables $\boldsymbol {h}^{H}_{\rm rn} \boldsymbol{\Theta}^{\mathrm{FD}} \boldsymbol {h}_{\rm br} = \boldsymbol {v}^{H}\boldsymbol {\Phi }$, where $\boldsymbol {\Phi} = \text {diag} (\boldsymbol {h}^{H}_{\rm rn})\boldsymbol {h}_{\rm br} \in \mathbb {C}^{M \times 1}$, $\boldsymbol {h}^{H}_{\rm rf} \boldsymbol{\Theta}^{\mathrm{FD}}\boldsymbol {h}_{\rm br} = \boldsymbol{v}^{H} \boldsymbol{\Psi}$, where $\boldsymbol {\Psi} = \text {diag}(\boldsymbol{h}^{H}_{\rm rf})\boldsymbol {h}_{\rm br} \in \mathbb {C}^{M \times 1}$, and $\boldsymbol {h}^{H}_{\rm rf} \boldsymbol{\Theta}^{\mathrm{FD}} \boldsymbol {h}_{\rm nr} = \boldsymbol {v}^{H}\boldsymbol {\Xi}$, where $\boldsymbol{\Xi } = \text {diag}(\boldsymbol {h}^{H}_{\rm rf})\boldsymbol {h}_{\rm nr} \in \mathbb {C}^{M \times 1}$, we have 
\begin{align}
|h_{\rm bn} + \boldsymbol {h}^{H}_{\rm rn} \boldsymbol{\Theta}^{\mathrm{FD}} \boldsymbol {h}_{\rm br}|^{2} &= |h_{\rm bn} + \boldsymbol {v}^{H}\boldsymbol{\Phi}|^{2}, \cr |h_{\rm bf} + \boldsymbol {h}^{H}_{\rm rf} \boldsymbol{\Theta}^{\mathrm{FD}} \boldsymbol {h}_{\rm br}|^{2} &= |h_{\rm bf} + \boldsymbol {v}^{H}\boldsymbol {\Psi}|^{2}, \cr 
|h_{\rm nf} + \boldsymbol {h}^{H}_{\rm rf} \boldsymbol{\Theta}^{\mathrm{FD}} \boldsymbol {h}_{\rm nr}|^{2} &=|h_{\rm nf} + \boldsymbol {v}^{H}\boldsymbol {\Xi}|^{2}.
\end{align}
Then, the achievable data rates in \eqref{Eq: FD R_n_f}, \eqref{Eq: FD R_n_n}, and \eqref{Eq: FD R_MRC} can be expressed as follows
\begin{align}
    \mathcal{R}_{\rm n \longrightarrow f}^{\mathrm{FD}} = &~\mathrm{log}\left( 1 + \frac{\alpha_{\rm f}^{\mathrm{FD}} P_{\mathrm{BS}} (\bar{\boldsymbol {v}}^{H}\boldsymbol {Q}_{\rm bn}\bar{\boldsymbol {v}} + | {h}_{\rm bn}|^{2})}{\alpha_{\rm n}^{\mathrm{FD}} P_{\mathrm{BS}} (\bar{\boldsymbol {v}}^{H}\boldsymbol {Q}_{\rm bn}\bar{\boldsymbol {v}} + | {h}_{\rm bn}|^{2}) + \beta^{\mathrm{FD}} P_{n} \gamma_{\rm{SI}} + \sigma^2_{\rm n}}\right),\\
    \mathcal{R}_{n \longrightarrow n}^{\mathrm{FD}} = &~\mathrm{log}\left( 1 + \frac{\alpha_{\rm n}^{\mathrm{FD}} P_{\mathrm{BS}} (\bar{\boldsymbol {v}}^{H}\boldsymbol {Q}_{\rm bn}\bar{\boldsymbol{v}} + |{h}_{\rm bn}|^{2})}{\beta^{\mathrm{FD}} P_{n} \gamma_{\rm{SI}} + \sigma_{\rm n}^2}\right), \\
    \mathcal{R}_{\mathrm{MRC}}^{\mathrm{FD}} = & \mathrm{log}\left(1 + \frac{\alpha_{\rm f}^{\mathrm{FD}} P_{\mathrm{BS}} (\bar{\boldsymbol {v}}^{H}\boldsymbol {Q}_{\rm bf}\bar{\boldsymbol {v}} + |{h}_{\rm bf}|^{2}) + \beta^{\mathrm{FD}}P_{\rm n} (\bar{\boldsymbol {v}}^{H}\boldsymbol {Q}_{\rm nf}\bar{\boldsymbol {v}} + | {h}_{\rm nf}|^{2})}{\alpha_{\rm n}^{\mathrm{FD}} P_{\mathrm{BS}} (\bar{\boldsymbol {v}}^{H}\boldsymbol {Q}_{\rm bf}\bar{\boldsymbol {v}} + | {h}_{\rm bf}|^{2}) + 1}\right),
\end{align}
where 
\begin{equation} \boldsymbol {Q}_{\rm bn}=\begin{bmatrix} \boldsymbol {\Phi }\boldsymbol {\Phi }^{H} &\boldsymbol {\Phi } {h}^H_{\rm bn} \\  {h}_{\rm bn}\boldsymbol {\Phi }^{H} &0 \\ \end{bmatrix}, ~ \boldsymbol {Q}_{\rm bf}=\begin{bmatrix} \boldsymbol {\Psi }\boldsymbol {\Psi }^{H} &\boldsymbol {\Psi } {h}^H_{\rm bf} \\ {h}_{\rm bf}\boldsymbol {\Psi }^{H} & 0 \\ \end{bmatrix}, ~ \boldsymbol {Q}_{\rm nf}=\begin{bmatrix} \boldsymbol {\Xi }\boldsymbol {\Xi }^{H} &\boldsymbol {\Xi } {h}^H_{\rm nf} \\ {h}_{\rm nf}\boldsymbol {\Xi }^{H} &0 \\ \end{bmatrix}, ~ \bar{\boldsymbol {v}}=\begin{bmatrix} \boldsymbol {v} \\ t \\ \end{bmatrix}.\end{equation} 
\par Note that $\bar{\boldsymbol {v}}^{H}\boldsymbol {Q}_{z}\bar{\boldsymbol {v}}={\mathrm{tr}}(\boldsymbol {Q}_{z}\bar{\boldsymbol {v}}\bar{\boldsymbol {v}}^{H})$ for all $k \in \{{\rm bn}, {\rm bf}, \rm nf\}$ and define $\boldsymbol {V}=\bar{\boldsymbol {v}}\bar{\boldsymbol {v}}^{H}$, which needs to satisfy ${\mathrm{rank}}(\boldsymbol {V})=1$ and $\boldsymbol {V}\succeq \boldsymbol {0}$. However, the rank-one constraint is non-convex \cite{Zhang_TWC_2019}. By relaxing this constraint, the optimization problem $\mathrm{PS-FD}$ can be transformed into 
\allowdisplaybreaks
\begingroup
\begin{subequations}
\label{prob:SDR}
\begin{align}
&\hat{\mathcal{P}}: \mathrm{Find}~\boldsymbol{\theta}^{\mathrm{FD}}, \\
&\text{s.t.}\,\,\alpha_{\rm n}^{\mathrm{FD}}P_{\mathrm{BS}} ({\mathrm{tr}}(\boldsymbol {Q}_{\rm bn}\boldsymbol{V}) + |h_{\rm bn}|^2) \geq t^{\mathrm{FD}}_{\rm n} (\beta^{\mathrm{FD}} P_{\rm n}\gamma_{\mathrm{SI}} + \sigma_{\rm n}^2), \label{P2_C1}\\
&\quad \,\,\,  {\mathrm{tr}}(\boldsymbol {Q}\boldsymbol{V}) + \alpha_{\rm f}^{\mathrm{FD}}P_{{\mathrm{BS}}} |h_{\rm bf}|^2 + \beta^{\mathrm{FD}} P_{\rm n}|h_{\rm nf}|^2 \geq t^{\mathrm{FD}}_{\rm f}(\alpha_{\rm n}^{\mathrm{FD}}P_{\mathrm{BS}}({\mathrm{tr}}(\boldsymbol{Q}_{\rm bf}\boldsymbol{V}) + |h_{\rm bf}|^2) + \sigma_{\rm f}^2), \label{P2_C2}\\ 
&\quad \,\,\,\alpha_{\rm f}^{\mathrm{FD}}P_{\mathrm{BS}} ({\mathrm{tr}}(\boldsymbol {Q}_{\rm bn}\boldsymbol{V}) + |h_{\rm bn}|^2) \geq t^{\mathrm{FD}}_{\rm f} (\alpha_{\rm n}^{\mathrm{FD}}P_{\mathrm{BS}}({\mathrm{tr}}(\boldsymbol {Q}_{\rm bn}\boldsymbol{V}) + |h_{\rm bn}|^2) + \beta^{\mathrm{FD}} P_{\rm n}\gamma_{\mathrm{SI}}+ \sigma_{\rm f}^2), \label{P2_C4} \\&\quad\,\,\, \boldsymbol {V} \succeq 0, \\ &\quad\,\,\, [\boldsymbol{V}]_{m,m} = 1, \qquad\qquad \forall\,\, m \in \llbracket 1, M+1 \rrbracket, 
\end{align}
\end{subequations}
\endgroup
where $\boldsymbol{Q} = \alpha_{\rm f}^{\mathrm{FD}}P_{\mathrm{BS}} \boldsymbol{Q}_{\rm bf} + \beta^{\mathrm{FD}} P_{\rm n} \boldsymbol{Q}_{\rm nf}$. It is not difficult to observe that problem $\hat{\mathcal{P}}$ is an SDP. Thus, an optimal solution can be obtained by existing convex optimization solvers such as CVX \cite{Zhang_TWC_2019}. However, due to the rank one constraint relaxation, the SDR may not be tight for $\mathrm{PS-FD}$. To overcome this issue, the Gaussian randomization approach can be similarly used to find a feasible solution to problem $\mathrm{PS-FD}$ dependent on the higher-rank solution that is obtained by solving problem $\hat{\mathcal{P}}$. Following the same procedures in \textbf{Algorithm 1}, the solutions for the phase-shift coefficients, $\boldsymbol{\theta}^{\mathrm{FD}}$, and the power control coefficients, $(\alpha_{\rm n}^{\mathrm{FD}}, \alpha_{\rm f}^{\mathrm{FD}}, \beta^{\mathrm{FD}})$ can be obtained.
\section{Results and Discussion}
\label{Sec:Simulation}
In this section, several numerical examples and simulation results are presented to examine the performance of the proposed schemes RIS-enabled FD C-NOMA and RIS-enabled HD C-NOMA networks. Since it has been proven that the performance of traditional FD C-NOMA has better performance than the HD C-NOMA counterpart, we consider the traditional FD C-NOMA technique without any assistance of RIS as a benchmark \cite{Dinh_2020,Zhang_2017_Full, Zhong_2016_Non}. This technique is referred to as "FD C-NOMA without RIS". In this scheme, the BS communicates with UE$_{\rm n}$ and UE$_{\rm f}$ and UE$_{\rm n}$ communication with UE$_{\rm f}$ in the same time-slot without any aid from the RIS. Hence, with the same target objective (power consumption minimization), we only need to use the power allocation coefficients at the BS and UE$_{\rm n}$ to minimize the total transmit power, which are obtained by \textit{\textbf{Theorem 4}} considering only the direct channel gains (BS $\longrightarrow$ UE$_{\rm n}$, BS $\longrightarrow$ UE$_{\rm f}$, and UE$_{\rm n}$ $\longrightarrow$ UE$_{\rm f}$) in the derived closed-form expressions.
\subsection{Simulation Settings}
The simulation environment consists of one BS, one RIS, one UE$_{\rm n}$, and one UE$_{\rm f}$ which are located in a 3-dimensional Cartesian coordinates systems $\left(X,Y,Z \right)$ at $(0\,\text{m}, 10\,\text{m}, 0\,\text{m}),$ $(80\,\text{m}, 10\,\text{m}, 0\,\text{m}),$ $(40\,\text{m}, 0\,\text{m}, 0\,\text{m})$ and $(80\,\text{m}, 0\,\text{m}, 0\,\text{m})$, respectively. Note that the RIS is located nearby UE$_{\rm f}$ to enhance its communication link and correspondingly its achievable data rate, which reflects on minimizing the required transmission power. Moreover, the SI at UE$_{\rm n}$ is modeled as a Rayleigh fading with zero mean and $\Omega_{\mathrm{SI}}$ variance \cite{Liu_2017_Hybrid}. Note that, Monte-Carlo simulations are employed over $10^4$ independent channel realizations. The system parameters for the simulations are listed in Table \ref{T2} \cite{zuo2020reconfigurable, Guo_2020_Intelligent, Elhattab_2019_A}.
\vspace{-0.8cm}
\begin{table}[t]
\caption{Simulation Parameters}
\centering
\renewcommand{\arraystretch}{0.5} 
\setlength{\tabcolsep}{0.2cm}  
\begin{tabular}{| c | c | c |}
  \hline 
  Parameter & Symbol & Value \\
  \hline
  Rician factors of the BS-RIS and the RIS-UE$_{\rm f}$ links & $\kappa_{\rm br}, \kappa_{\rm rf}$  & 3 $\mathrm{dB}$ \\
  \hline
  Path-loss at reference distance of 1 m & $\rho_0$ & $-30$ dB\\ 
  \hline
  Path-loss exponents for the BS-RIS and for the RIS-UE$_{\rm f}$ links & $\eta_{\rm br}, \eta_{\rm rf}$ & 2.2 \\ 
  \hline
  Path-loss exponents for the BS-UE$_{\rm f}$ and for the UE$_{\rm n}$-UE$_{\rm f}$ links& $\eta_{\rm bf}, \eta_{\rm nf}$ & 4 \\
  \hline 
  Path-loss exponent for the  BS-UE$_{\rm n}$ link & $\eta_{\rm bn}$ & 3.5 \\
  \hline 
  Path-loss exponent for the UE$_{\rm n}$-RIS link & $\eta_{\rm nr}$ & 3 \\ 
  \hline
  Power budget at the BS & $P_{\mathrm{BS}}$ & 33 dBm \\
  \hline 
  Power budget at UE$_{\rm n}$ & $P_{\rm n}$ & 23 dBm \\
  \hline
  Noise power at UE$_{\mathrm{n}}$ and UE$_{\mathrm{f}}$ & $\sigma_{\rm n}^2, \sigma_{\rm f}^2$ & $- 90$ dB \\ 
  \hline 
  Minimum rate QoS requirement for UE$_{\rm n}$ & $R_{\mathrm{n}}^{\mathrm{th}}$ & 1 bits/sec/Hz \\ 
  \hline
\end{tabular} 
\label{T2}
\end{table}
\subsection{Convergence of The Proposed Schemes}
\begin{figure}[!tbp]
  \centering
  \begin{minipage}[b]{0.45\textwidth}
  \centering
    \includegraphics[width=\textwidth]{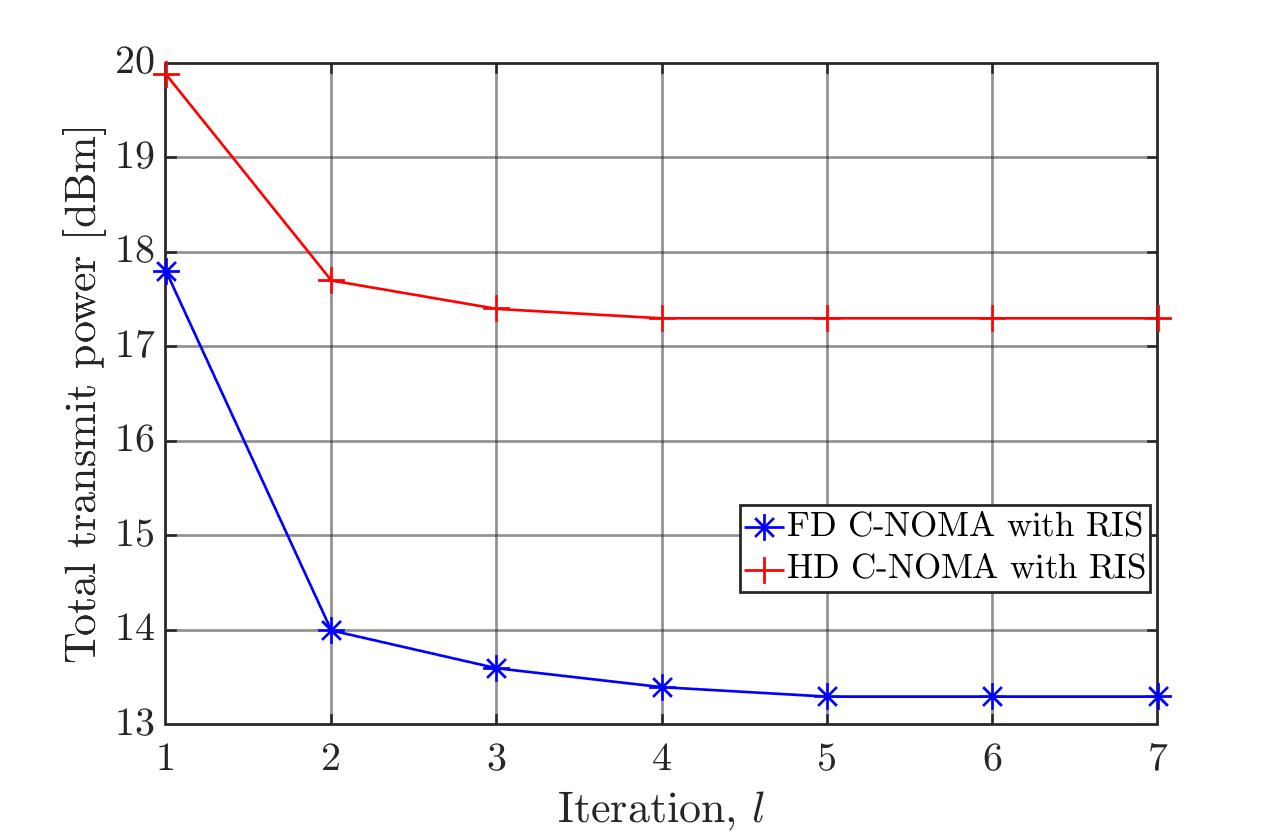}
    \caption{Convergence of the proposed algorithm.}\label{fig:Convergence}
  \end{minipage}
  \hfill
  \begin{minipage}[b]{0.45\textwidth}
    \includegraphics[width=\textwidth]{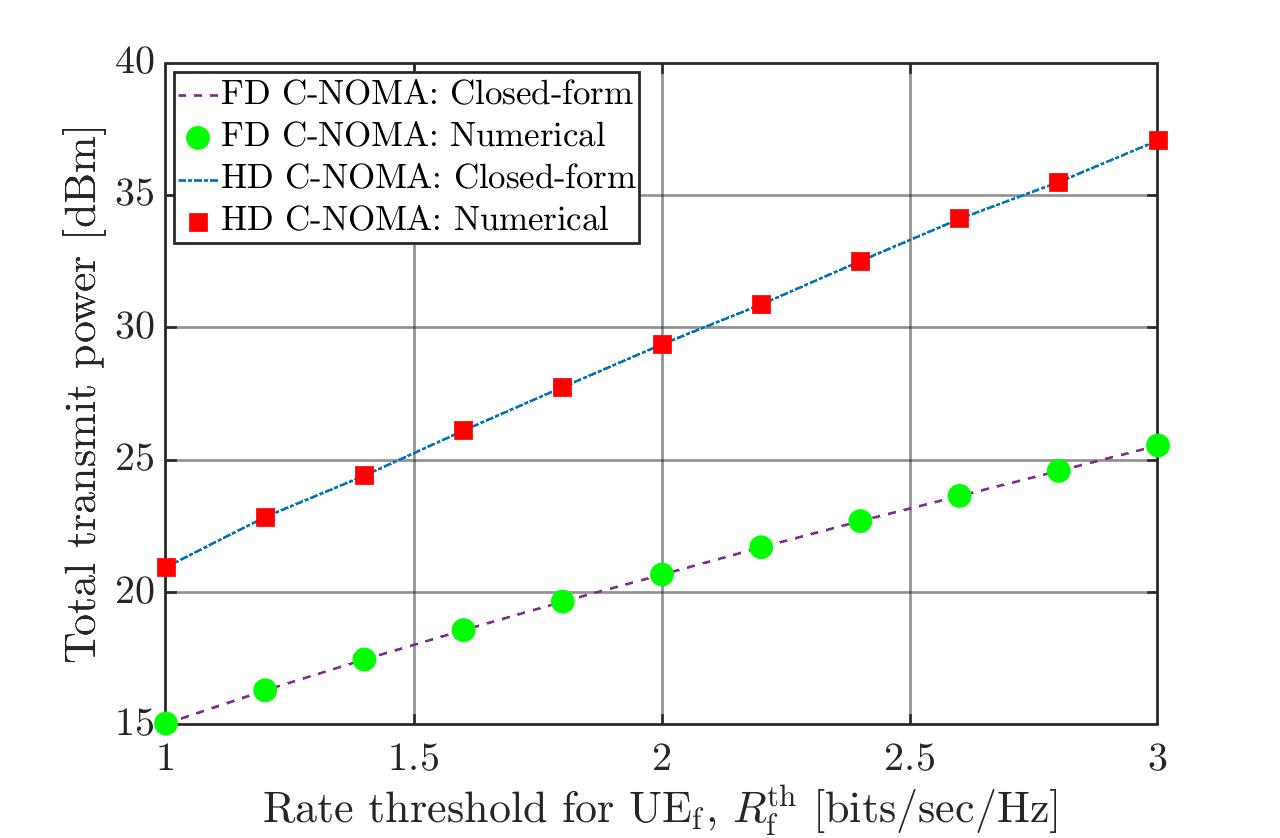}
    \caption{Analytical  and  numerical  total transmit power  versus  the rate threshold for UE$_{\rm f}$.} \label{fig:validation}
  \end{minipage}
\end{figure}
\par Fig. \ref{fig:Convergence} describes the convergence behavior of the proposed RIS-enabled FD C-NOMA and RIS-enabled HD C-NOMA algorithms versus the iteration number with number of RIS elements $M = 80$, SI parameter $\Omega_{\rm SI}= -100$ dB, and minimum required rate for UE$_{\rm f}$  $R_{\rm f}^{\rm th}= 2$ bits/sec/Hz. It can be observed that the proposed RIS-enabled FD C-NOMA and RIS-enabled HD C-NOMA algorithms converge in about $4$ to $5$ iterations, which provides a low computational complexity \cite{Zhang_TWC_2019}.
\subsection{Validation of The Closed-form Expressions for The Power Allocation Coefficients}
It can be seen that the closed-form expressions in \textbf{\textit{Theorem 2}} and \textbf{\textit{Theorem 4}} are derived for a given phase-shift matrix. As a result, in order to validate the closed-form expression for the power allocation coefficients, we consider FD C-NOMA without RIS and HD- C-NOMA without RIS as the schemes that validate the analytical results. Fig. \ref{fig:validation} depicts the analytical and numerical total transmit power for FD C-NOMA without RIS and HD C-NOMA without RIS schemes versus the minimum required rate for UE$_{\rm f}$, $R_{\rm f}^{\mathrm{th}}$ with $\Omega_{\mathrm{SI}} = - 90$ dB. The analytical results are obtained based on the closed-from expressions derived in \textit{\textbf{Theorem 2}} and \textit{\textbf{Theorem 4}}, while the numerical results are obtained by solving problem $\mathrm{PC-HD}$ and $\mathrm{PC-FD}$ using an off-the-shelf optimization solver. \footnote{The adopted solver is fmincon that is a predefined \MATLAB solver \cite{Dinh_2020,Elhattab_2020_Power, ebbesen2012generic}. Moreover, $10^3$ different initial points are generated to guarantee the convergence of the solver to the optimal solution.} It can be seen from Fig. \ref{fig:validation} that the analytical results match perfectly the  numerical results which validates the optimality of the closed-form expressions of the power allocation coefficients obtained by \textbf{\textit{Theorem 2}} and \textbf{\textit{Theorem 4}}.
\begin{figure}[!t]
    \centering
     \includegraphics[width = 0.45 \textwidth]{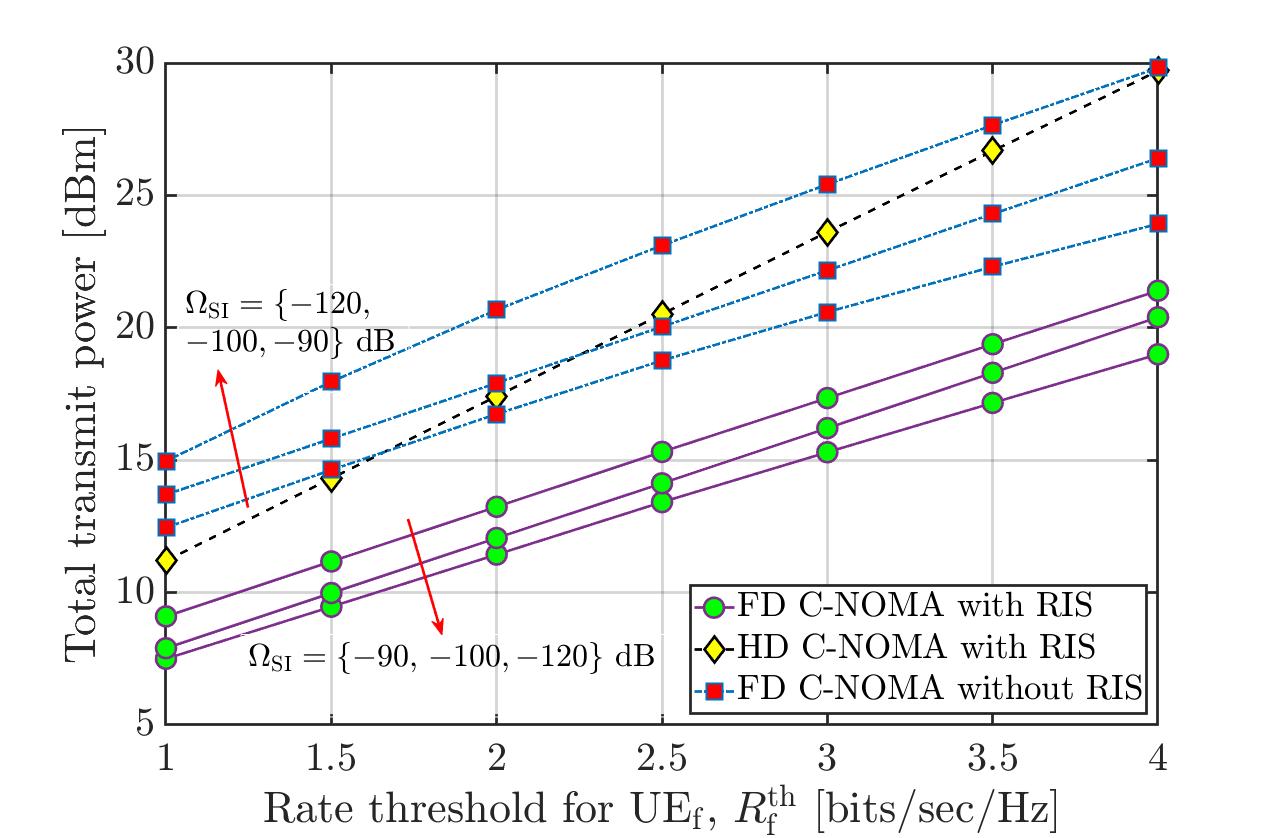}
    \caption{Total transmit power versus the rate threshold $R_{\rm f}^{\rm{th}}$ of UE$_{\rm f}$ with different SI values due to the FD operation $(M = 80)$.}
    \label{fig:Rth}
\end{figure}
\subsection{Total Transmit Power Performance}
Fig. \ref{fig:Rth} compares the performance of the proposed RIS-enabled FD C-NOMA scheme, the proposed RIS-enabled HD C-NOMA, and the FD C-NOMA without RIS adopted in \cite{Zhang2019Resource,Zhang_2017_Full, Zhong_2016_Non}, when varying the required data rate threshold for UE$_{\rm f}$. Different values of the SI channel gain parameter $\Omega_{\mathrm{SI}}$ are considered for the cases of RIS-enabled FD C-NOMA and FD C-NOMA without RIS schemes. First, it can be seen that FD C-NOMA with RIS scheme has a significant gain compared to HD C-NOMA with RIS and FD C-NOMA without RIS. Second, due to the pre-log penalty in the HD scenario, the gap between the FD C-NOMA with RIS and the HD C-NOMA with RIS increases when the required data rate threshold increases. Third, since the RIS constructs strong combined channel gain for the BS-UE$_{\rm f}$ and UE$_{\rm n}$-UE$_{\rm f}$ communication links, this makes the active nodes (BS and UE$_{\rm n}$) able to transmit with low power, and hence, the effects of the SI on the system performance is weak compared to the FD C-NOMA without RIS. Finally, it can be seen that the HD C-NOMA with RIS has the ability to beat the FD C-NOMA without RIS proposed in \cite{Zhang2019Resource,Zhang_2017_Full, Zhong_2016_Non}. Depending on the strength of the SI, the HD C-NOMA with RIS can achieve a better performance for most of the considered cases (different rate thresholds for UE$_{\mathrm{f}}$).
\begin{figure}[!t]
\centering
\subfigure[]{
    \includegraphics[width=0.45\columnwidth ] {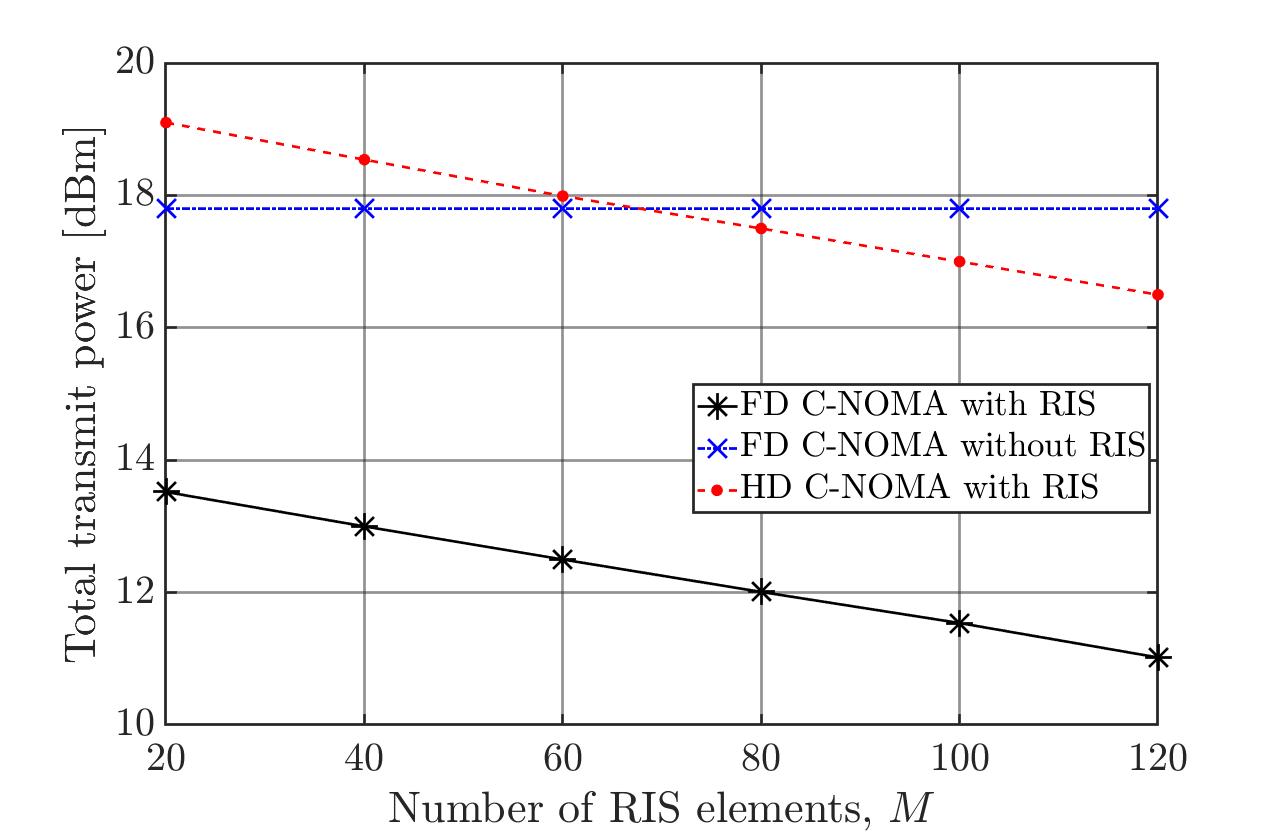}
}
\subfigure[]{
   \includegraphics[width=0.45\columnwidth ]{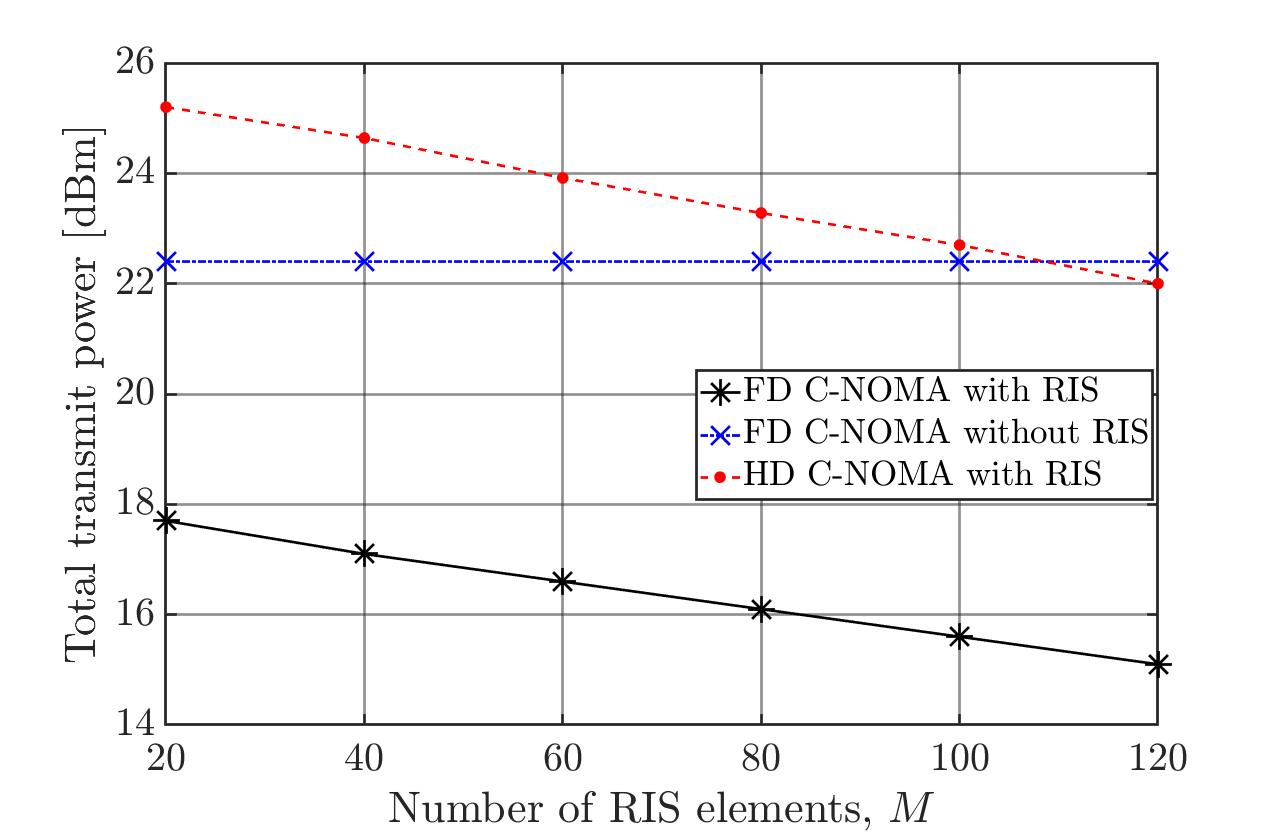}
}
\caption[]{Total transmit power versus the number of RIS elements, when $\Omega_{\mathrm{SI}} = -100$ dB, where (a) for the case when  $R^{\mathrm{th}}_{\rm f} = 2$ bits/s/Hz and (b) for the case when  $R^{\mathrm{th}}_{\rm f} = 3$ bits/s/Hz.}
\label{Fig: Power_vs_RIS elements}
\end{figure}
\par Fig. \ref{Fig: Power_vs_RIS elements} presents the total transmit power for the three schemes versus the number of RIS reflecting elements, and $\Omega_{\mathrm{SI}} = -100$ dB, where the rate QoS constraint at UE$_{\rm f}$ $R^{\mathrm{th}}_{\rm f} = 2$ bits/s/Hz in Fig. \ref{Fig: Power_vs_RIS elements}.a and $R^{\mathrm{th}}_{\rm f} = 3$ bits/s/Hz in Fig. \ref{Fig: Power_vs_RIS elements}.b. First, it is observed that the total transmit power that is required by the RIS-based schemes decreases when the number of meta-atoms of the RIS increases while the total transmit power for FD C-NOMA without RIS scheme  remains unchanged. This is because a larger number of RIS meta-atoms leads to a higher combined channel gains and hence a higher passive array gains. Second, it can be seen that the number of meta-atoms required at the RIS to allow the HD C-NOMA with RIS to beat the traditional FD C-NOMA depends on the required QoS at UE$_{\rm f}$.  This is because a high QoS requirement at UE$_{\rm f}$ needs a high passive array gain to overcome the pre-log penalty in the HD mode. Third, the RIS-enabled FD C-NOMA network significantly outperforms the other schemes, which reveals the potential of integrating RIS in FD C-NOMA networks in enhancing the network power efficiency.
\begin{figure}[!t]
\centering
\subfigure[ ]{
    \includegraphics[width=0.46\columnwidth ]{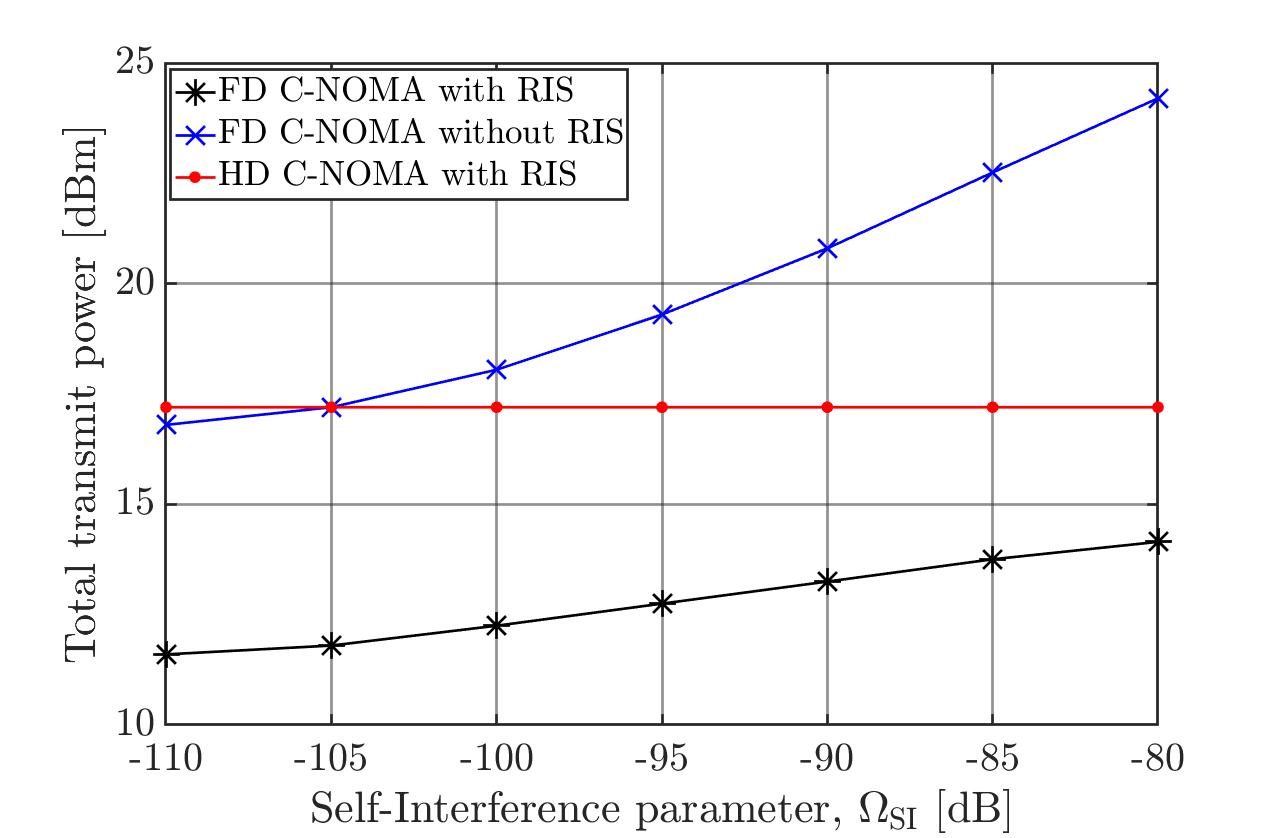}
    }
\subfigure[]{
   \includegraphics[width=0.46\columnwidth ]{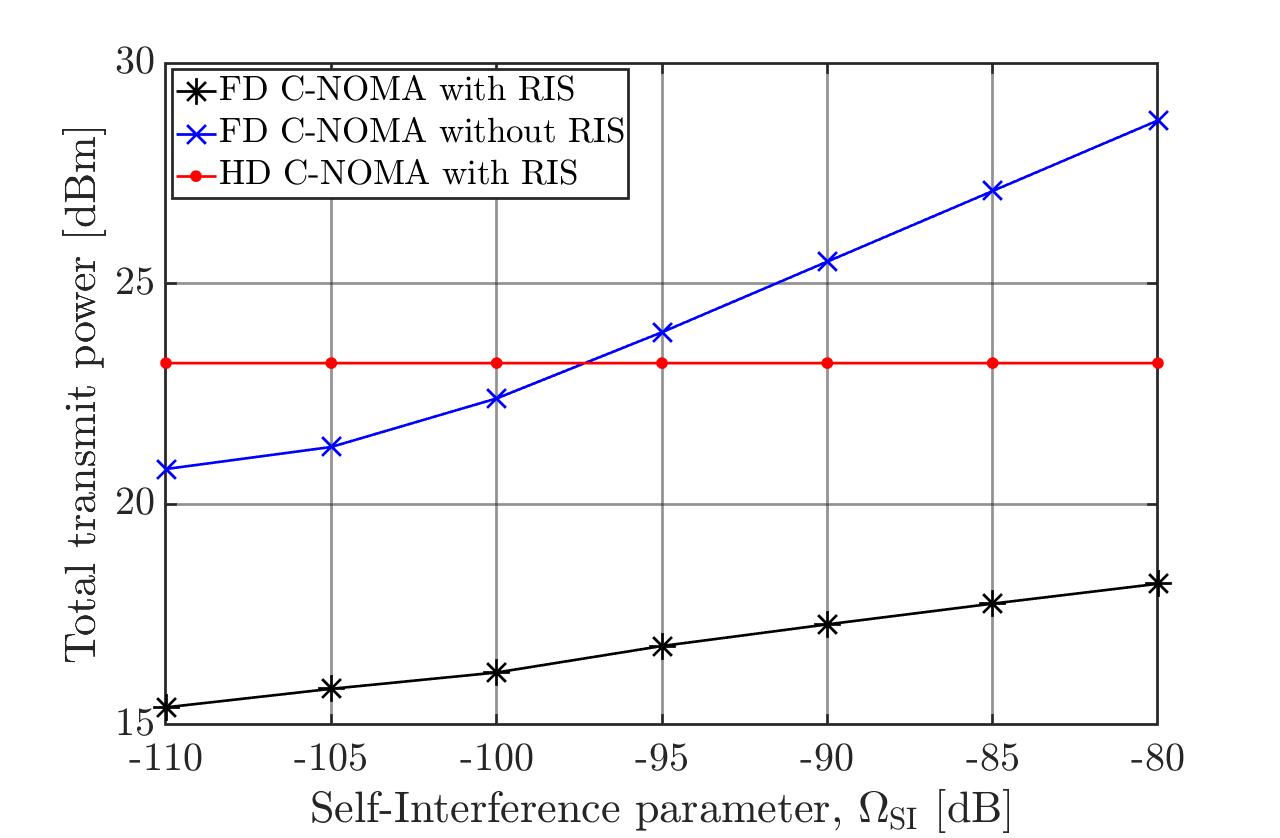}
}
\caption[]{Total transmit power versus the SI parameter at UE$_{\rm n}$, when the number of reflecting elements at the RIS is $M = 80$ dB, where (a) for the case when  $R^{\mathrm{th}}_{\rm f} = 2$ bits/s/Hz and (b) for the case when  $R^{\mathrm{th}}_{\rm f} = 3$ bits/s/Hz.}
\label{fig: power vs SI}
\end{figure}
\par Fig. \ref{fig: power vs SI}.a and Fig. \ref{fig: power vs SI}.b depict the total transmit power versus the SI values at the near NOMA user for $R_{\mathrm{f}}^{\mathrm{th}} = 2$ and $R_{\mathrm{f}}^{\mathrm{th}} = 3$, respectively, when $M = 80$. First, it can be observed that the proposed FD C-NOMA with RIS scheme gives a significant performance enhancement than the HD C-NOMA with RIS scheme when $\Omega_{\mathrm{SI}}$ is relatively small. However, as $\Omega_{\mathrm{SI}}$ increases, the performance gain between them decreases. This is because increasing the SI value restricts the power transmission at UE$_{\rm n}$ and hence the BS should increase its transmit power in order to meet the QoS constraint at UE$_{\rm f}$. Second, the HD C-NOMA with RIS scheme can achieve a significant performance compared to FD C-NOMA without RIS when $\Omega_{\mathrm{SI}}$ increases. This is because increasing $\Omega_{\mathrm{SI}}$ leads to deteriorating the performance of the FD mode and hence the passive array gain at the RIS can make the HD C-NOMA with RIS be a favorable scheme compared to the FD C-NOMA without RIS scheme. Third, it can be also observed that the total transmit power in the network without RIS is sharply increasing in comparison with the rate of increase in the network with RIS. This means that the system with RIS can tolerate high values of SI, which validates the effectiveness of the amalgamation between FD C-NOMA and RIS.  
\par Fig. \ref{fig:power transmit} shows the effect of increasing the SI on the transmit power at both the BS and the near NOMA user for different numbers of RIS elements. It can be seen that the RIS elements have a great impact on the amount of transmit power at the active nodes (BS and UE$_{\rm n}$). In addition, it can be also seen that when $\Omega_{\mathrm{SI}}$ increases, UE$_{\rm n}$ should reduce its transmit power to avoid degrading its performance. On the other hand, reducing the power at UE$_{\rm n}$ will lead to a sharp increase at the BS side, which shed the lights on the effect of the D2D communication on the network performance in terms of the power efficiency. For instance, for $M = 80$, when $\Omega_{\mathrm{SI}}$ increases from $-100$ dB to $-90$ dB, the transmit power at UE$_{\rm n}$ reduces by $0.9$ dB, whereas the transmit power at the BS increases by $3$ dB.
\begin{figure}[!t]
    \centering
     \includegraphics[width = 0.45 \textwidth]{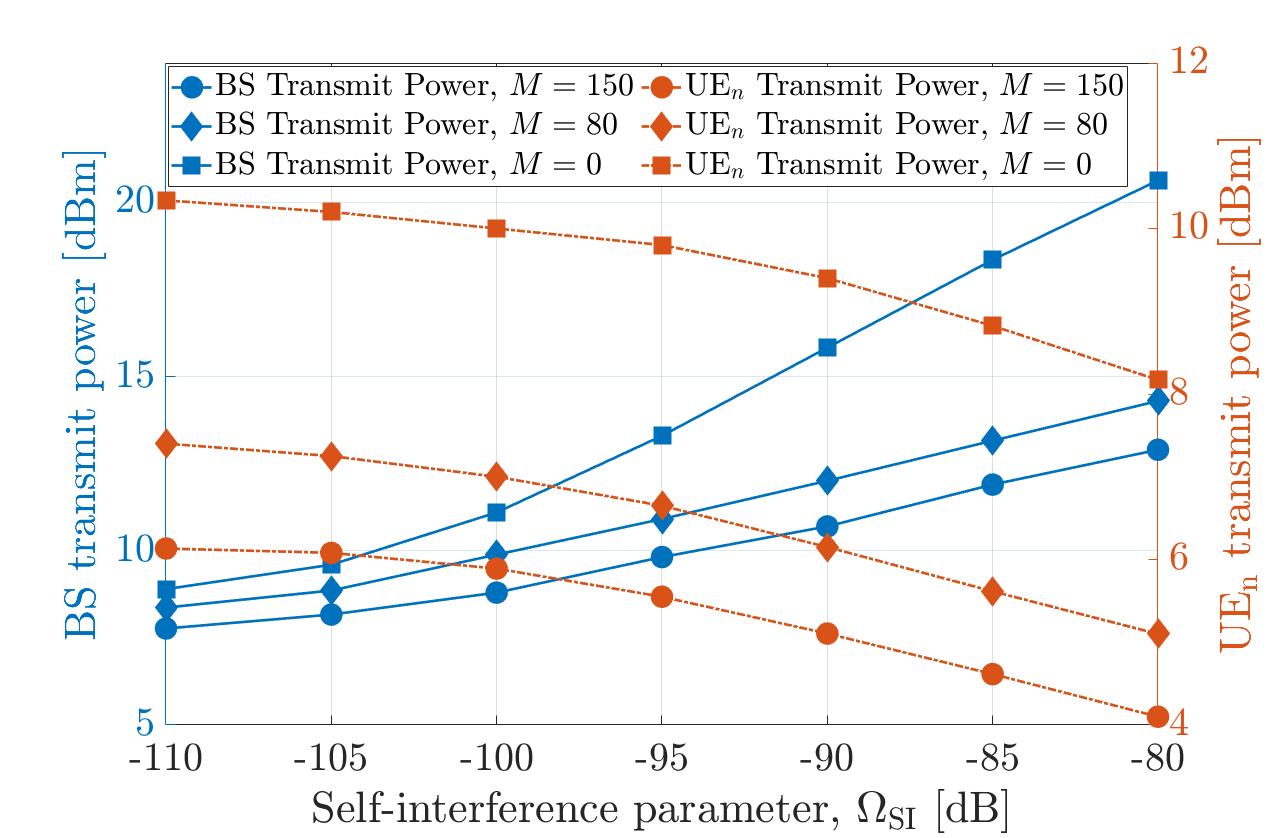}
    \caption{Total transmit power by the BS and UE$_{\rm n}$ versus the SI channel gains with different number of RIS elements for the proposed FD C-NOMA with RIS.}
    \label{fig:power transmit}
\end{figure}
\section{Conclusion}
\label{Sec:Conclusion}
In this paper, we investigated the RIS-empowered HD/FD C-NOMA downlink transmission scheme in order to minimize the total transmit power at the BS and at the relay user by jointly optimizing the power allocation coefficients at the BS and the transmit relaying power at the near NOMA user, along with the phase-shift coefficients at the RIS. In this formulated problem, we considered the QoS requirements in terms of the minimum data rate, SIC constraint, the power budget at the BS and the near NOMA user, and the RIS elements constraints. By invoking the alternating optimization technique, the non-convex power minimization optimization problem is decomposed into two sub-problems, power allocation optimization sub-problem and phase-shift optimization sub-problem, which are solved in an alternate manner. By leveraging the SDR, the RIS phase-shift coefficients are obtained. Meanwhile, for the power allocation sub-problem, we derived closed-form expressions for the optimal power allocation coefficients at the BS and the optimal transmit-relaying power from the near NOMA user.  The Simulation results show that the proposed RIS-enabled FD C-NOMA scheme significantly outperforms the FD C-NOMA without the assistance of the RIS. In addition, in spit of the pre-log penalty in the HD relaying mode, and according to the required QoS at the far NOMA user, the number of reflecting elements and the SI value, the proposed RIS-enabled HD C-NOMA can outperform the FD C-NOMA without RIS.  
\appendices 
\section{Proof of Theorem 1}
In this section, we present the proof of \textbf{\textit{Theorem 1}}. At the beginning, constraints \eqref{P1_C1}-\eqref{P1_C6} can be expressed as
\begin{subequations}
\begin{align}
&0 \leq \alpha_{\rm n}^{\mathrm{HD}} \leq \alpha_{\rm f}^{\mathrm{HD}} \leq 1,\label{P1_C11}\\
&\alpha_{\rm f}^{\mathrm{HD}}  \leq 1-\alpha_{\rm n}^{\mathrm{HD}},\label{P1_C12} \\
&0 \leq \beta^{\mathrm{HD}} \leq 1,\label{P1_C13} \\
&\alpha_{\rm n}^{\rm HD} \geq \frac{t_{\rm n}^{\rm HD}}{\gamma_{\rm bn}},\label{P1_C14} \\ 
&\alpha_{\rm f}^{\rm HD} \geq \alpha_{\rm n}^{\rm HD} t_{\rm f}^{\rm HD} + \frac{t_{\rm f}^{\rm HD}}{\gamma_{\rm bn}},\label{P1_C15} \\ 
&\beta^{\rm HD} \geq \frac{t_{\rm f}^{\rm HD}}{\gamma_{\rm d}} - \frac{\alpha_{\rm f}^{\rm HD} \gamma_{\rm bf}}{\gamma_{\rm d}\left(\alpha_{\rm n}^{\rm HD}\gamma_{\rm bf}+1 \right)}\label{P1_C16}.
\end{align}
\end{subequations}
Constraints \eqref{P1_C11} and \eqref{P1_C12} imply $0 \leq \alpha_{\rm n}^{\mathrm{HD}} \leq \frac{1}{2}$ In addition, constraints \eqref{P1_C12} and \eqref{P1_C15} imply $\alpha_{\rm n}^{\mathrm{HD}} \leq \frac{\gamma_{\rm bn} - t_{\rm f}^{\rm HD}}{\gamma_{\rm bn} \left(t_{\rm f}^{\rm HD}+1\right)}$. Therefore, it can be concluded that $\alpha_{\rm n}^{\mathrm{HD}}$ should satisfy $\alpha_{\min}^{\rm HD} \leq \alpha_{\rm n}^{\mathrm{HD}} \leq  \alpha_{\max}^{\rm HD}$, where $\alpha_{\min}^{\rm HD}$ and $\alpha_{\max}^{\rm HD}$ are expressed as shown in \eqref{HD_bounds} in \textbf{\textit{Theorem 1}}. Hence, for problem $\mathrm{PC-HD}$ to be feasible, the condition $\alpha_{\min}^{\rm HD} \leq \alpha_{\max}^{\rm HD}$ should be satisfied. On the other hand, one can remark that the lowest value of $\beta^{\rm HD}$ that satisfies  constraint \eqref{P1_C16} is reached at the lowest feasible value of $\alpha_{\rm n}^{\mathrm{HD}}$ and the highest feasible value of $\alpha_{\rm f}^{\mathrm{HD}}$. In this context, based on constraint \eqref{P1_C14}, the lowest feasible value of $\alpha_{\rm n}^{\mathrm{HD}}$ is $\alpha_{\rm n}^{\mathrm{HD}} = \alpha_{\min}^{\rm HD} = \frac{t^{\rm HD}}{\gamma_{\rm bn}}$ and, based on constraint \eqref{P1_C12}, the highest feasible value of $\alpha_{\rm f}^{\mathrm{HD}}$ is $\alpha_{\rm n}^{\mathrm{HD}} = 1 - \alpha_{\min}^{\rm HD} = 1-\frac{t^{\rm HD}}{\gamma_{\rm bn}}$. Therefore, the lowest feasible value of $\beta^{\rm HD}$ is given by $ \beta_{\min}^{\rm HD}$ in \eqref{HD_bounds}. Consequently, for problem $\mathrm{PC-HD}$ to be feasible, the condition $\beta_{\min}^{\rm HD} \leq \beta_{\max}^{\rm HD}$ should be also satisfied, where $\beta_{\max}^{\rm HD} = 1$, which completes the proof.
\section{Proof of Theorem 2}
In this section, we present the proof of \textbf{\textit{Theorem 2}}. With the goal of minimizing the total transmit power of the C-NOMA cellular system, two directions can be considered. The first direction consists of minimizing the BS transmit power first and then minimizing the one of the near UE, while the second direction consists of minimizing the transmit power of the near UE first and then minimizing the one of the BS. Let us start with the first direction. As discussed in the previous appendix, the lowest feasible value of $\alpha_{\rm n}^{\rm HD}$ is $\alpha_{\rm n,1}^{\rm HD} = \alpha_{\min}^{\rm HD}$. Based on this, we can see from constraints \eqref{P1_C11} and \eqref{P1_C15} that the lowest feasible value of $\alpha_{\rm f}^{\rm HD}$ is $\alpha_{\rm f,1}^{\rm HD} = \max\left(\alpha_{\min}^{\rm HD},\alpha_{\min}^{\rm HD} t_{\rm f}^{\rm HD} + \frac{t_{\rm f}^{\rm HD}}{\gamma_{\rm n}}\right)$. Therefore, by injecting these two values in constraint \eqref{P1_C16}, we conclude that the lowest value of $\beta^{\rm HD}$ is $\beta_1^{\rm HD} = \frac{1}{\gamma_{\rm d}} \left(t_{\rm f}^{\rm HD} - \frac{\alpha_{\rm f,1}^{\rm HD} \gamma_{\rm f}}{\alpha_{\rm n,1}^{\rm HD}\gamma_{\rm f}+1} \right)$. Now, considering the second direction, the lowest feasible value of $\beta^{\rm HD}$ is $\beta_1^{\rm HD} = 0$. on the other hand, recall that the lowest feasible value of $\alpha_{\rm n}^{\rm HD}$ is $\alpha_{\rm n,2}^{\rm HD} = \alpha_{\min}^{\rm HD}$. Therefore, by injecting these two values in constraint \eqref{P1_C16}, we find that the lowest feasible value of $\alpha_{\rm f}^{\rm HD}$ is $\alpha_{\rm f,2}^{\rm HD} = \frac{\left(\alpha_{\min}^{\rm HD}\gamma_{\rm bn} + 1\right)t_{\rm f}^{\rm HD}}{\gamma_{\rm bn}}$. Finally, the optimal power control scheme is the one among the above two schemes that minimize the total transmit power of the HD C-NOMA system, which completes the proof.
\section{Proof of Theorem 3}
In this section, we present the proof of \textbf{\textit{Theorem 3}}. At the beginning, constraints \eqref{P3_C1}-\eqref{P3_C6} can be expressed as
\begin{subequations}
\begin{align}
&0 \leq \alpha_{\rm n}^{\mathrm{FD}} \leq \alpha_{\rm f}^{\mathrm{FD}} \leq 1,\label{P3_C11}\\
&\alpha_{\rm f}^{\mathrm{FD}}  \leq 1-\alpha_{\rm n}^{\mathrm{FD}},\label{P3_C12} \\
&0 \leq \beta^{\mathrm{FD}} \leq 1,\label{P3_C13} \\
&\alpha_{\rm n}^{\rm FD} \geq \frac{\gamma_{\rm SI}}{ \gamma_{\rm bn}} \beta^{\mathrm{FD}} +  \frac{t_{\rm n}^{\rm FD}}{\gamma_{\rm bn}},\label{P3_C14} \\ 
&\alpha_{\rm f}^{\rm FD} \geq \alpha_{\rm n}^{\rm FD} t_{\rm f}^{\rm FD} + \frac{\gamma_{\rm SI} t_{\rm f}^{\rm FD}}{\gamma_{\rm bn}} \beta^{\mathrm{FD}} + \frac{t_{\rm f}^{\rm FD}}{\gamma_{\rm bn}},\label{P3_C15} \\ 
&\alpha_{\rm f}^{\rm FD} \geq \alpha_{\rm n}^{\rm FD} t_{\rm f}^{\rm FD} - \frac{ \gamma_{\rm d} t_{\rm f}^{\rm FD}}{ \gamma_{\rm bf}} \beta^{\mathrm{FD}} + \frac{t_{\rm f}^{\rm FD}}{\gamma_{\rm bf}} \label{P3_C16}.
\end{align}
\end{subequations}
Constraints \eqref{P3_C11} and \eqref{P3_C12} imply $0 \leq \alpha_{\rm n}^{\mathrm{FD}} \leq \frac{1}{2}$. Let $B_1$ be the bound whose expression is given in constraint \eqref{P3_C14}. In addition, constraints \eqref{P3_C12} and \eqref{P3_C15} imply 
\begin{equation}
    \label{P3_C18}
    B_2:\,\,\alpha_{\rm n}^{\mathrm{HD}} \leq \frac{\gamma_{\rm bn} - t_{\rm f}^{\rm HD} - P_{\rm n} \gamma_{\rm SI} t_{\rm f}^{\rm HD} \beta^{\mathrm{FD}} }{\gamma_{\rm bn} \left(t_{\rm f}^{\rm HD}+1\right)}.
\end{equation}
Moreover, based on constraints \eqref{P3_C12} and \eqref{P3_C16}, we obtrain
\begin{equation}
    \label{P3_C19}
    B_3:\,\,\alpha_{\rm n}^{\mathrm{HD}} \leq \frac{\gamma_{\rm bn} - t_{\rm f}^{\rm HD} + P_{\rm n} \gamma_{\rm d} \beta^{\mathrm{FD}} }{\gamma_{\rm bf} \left(t_{\rm f}^{\rm HD}+1\right)}.
\end{equation}
The bounds $B_1$, $B_2$ and $B_3$ are, each, a function of the variables
$\left(\alpha_{\rm n}^{\rm FD},\beta^{\rm FD}\right)$. Therefore, problem $\mathrm{PC-FD}$ is feasible if and only if the bound $B_1$ can be lower than the bounds $B_2$ and $B_3$ simultaneously within the region $\left(\alpha_{\rm n}^{\rm FD},\beta^{\rm FD}\right) \in \left[0,\frac{1}{2} \right] \times \left[0,1 \right]$. The condition $B_1 \leq B_2$ implies that $\beta^{\rm FD}$ should satisfy $\beta^{\rm FD} \leq \frac{\gamma_{\rm n} - t_{\rm f}^{\rm FD} - t_{\rm n}^{\rm FD}\left(1 + t_{\rm f}^{\rm FD} \right)}{\gamma_{\rm SI}t_{\rm n}^{\rm FD}\left(1 + t_{\rm f}^{\rm FD} \right) + t_{\rm f}^{\rm FD}}$. On the other hand, the condition $B_1 \leq B_3$ implies that $\beta^{\rm FD}$ should satisfy $\beta^{\rm FD} \geq \frac{ \gamma_{\rm bf} - t_{\rm f}^{\rm FD} - \frac{\gamma_{\rm bf}\left(1 + t_{\rm f}^{\rm FD} \right)t_{\rm n}^{\rm FD}}{\gamma_{\rm n}}}{c_1-c_2}$ if $c_1 \leq c_2$ and $\beta^{\rm FD} \leq \frac{\gamma_{\rm bf} - t_{\rm f}^{\rm FD} - \frac{\gamma_{\rm bf}\left(1 + t_{\rm f}^{\rm FD} \right)t_{\rm n}^{\rm FD}}{\gamma_{\rm n}}}{c_1-c_2}$ if $c_1 \geq c_2$. Therefore, along with the fact that $\beta^{\rm FD} \in [0,1]$, we obtain the first condition in \textbf{\textit{Theorem 3}}. Furthermore, since $B_1$ should be lower than the bounds $B_2$ and $B_3$ simultaneously within the region $\left(\alpha_{\rm n}^{\rm FD},\beta^{\rm FD}\right) \in \left[0,\frac{1}{2} \right] \times \left[0,1 \right]$, we conclude that $\frac{\gamma_{\rm SI} t_{\rm n}^{\rm FD}}{ \gamma_{\rm bn}} \beta_{\min}^{\rm FD} + \frac{t_{\rm n}^{\rm FD}}{\gamma_{\rm bn}}$ should be lower than $\frac{1}{2}$, which is the second condition of \textbf{\textit{Theorem 3}} and this completes the proof.
\section{Proof of Theorem 4}
In this section, we present the proof of \textbf{\textit{Theorem 4}}. Since the objective function of problem $\mathrm{PC-FD}$, the optimal solution is one of the intersection points of the boundaries of its feasibility region. The point with coordinates $\left(\alpha_{\rm c}^{\rm FD},\beta_{\rm c}^{\rm FD}\right)$, defined in \eqref{eq:vars} represents the intersection point of the bounds $B_2$ and $B_3$. Therefore, depending on whether $\beta_{\rm c}^{\rm FD}$ is within the feasibility interval $\left[\beta_{\min}^{\rm FD}, \beta_{\max}^{\rm FD} \right]$, this intersection point is transformed into the point $\left(\alpha_{0}^{\rm FD}, \beta_{0}^{\rm FD}\right)$ in \eqref{eq:transformation} in order to verify if this point should be considered as a candidate solutions or no. Based on this, the intersection points of the the boundaries of the feasibility region of problem $\mathrm{PC-FD}$ are given in \eqref{eq:intersection_points}. Afterwards, for each intersection point $\left(\alpha_{\rm n}^{\rm FD},\beta^{\rm FD}\right)$, the corresponding $\alpha_{\rm f}^{\rm FD}$ can be obtained as shown in \eqref{eq:power_far}, which in turn constructs a candidate solution $\left(\alpha_{\rm n}^{\rm FD},\alpha_{\rm f}^{\rm FD},\beta^{\rm FD} \right)$. Consequently, the optimal solution of problem $\mathrm{PC-FD}$ is the candidate solution that produces the lowest objective function as presented in \textbf{\textit{Theorem 4}}, which completes the proof.
\bibliographystyle{IEEEtran}
\bibliography{IEEEabrv,reference}
\end{document}